\documentclass[12pt]{article}
\usepackage[utf8]{inputenc}
\usepackage[T1]{fontenc}
\usepackage[english]{babel}
\usepackage{amsmath,amssymb}
\usepackage{afterpage}
\usepackage{amsthm}
\usepackage{graphicx}
\usepackage[style=ieee,backend=biber,bibencoding=auto,autolang=other]{biblatex}
\usepackage{caption}
\usepackage{subcaption}
\usepackage{enumitem}
\usepackage{amsfonts}
\usepackage{csquotes}
\usepackage{comment}

\DeclareSourcemap{
  \maps{
    \map{
      \step[fieldset=langid, fieldvalue={english}]
      \step[fieldset=hyphenation, fieldvalue={english}]
    }
  }
}

\addbibresource{all.bib}

\pagestyle{plain}
\usepackage{indentfirst}

\makeatletter
\def\tagform@#1{\maketag@@@{(\ignorespaces#1\unskip\@@italiccorr)}}
\renewcommand{\eqref}[1]{\textup{{\normalfont(\ref{#1}}\normalfont)}}
\makeatother

\renewcommand{\phi}{\varphi}

\renewcommand{\leq}{\leqslant}
\renewcommand{\geq}{\geqslant}

\newcommand{\A}{\mathcal{A}}
\newcommand{\B}{\mathcal{B}}
\newcommand{\R}{\mathbb{R}}
\newcommand{\Q}{\mathbb{Q}}
\newcommand{\Z}{\mathbb{Z}}

\newcommand{\m}[1]{\mathrm{#1}\,}

\newcommand{\Cay}{\mathrm{Cay}}
\newcommand{\Aut}{\mathrm{Aut}}
\newcommand{\Vol}{\mathrm{Vol}}
\newcommand{\Conv}{\m{Conv}}
\renewcommand{\S}{S}

\newtheoremstyle{break}
  {\topsep}{\topsep}%
  {\itshape}{}%
  {\bfseries}{}%
  {\newline}{}%
%\theoremstyle{break}
%\numberwithin{figure}{chapter}

\newtheorem{megathm}{Theorem}
\newtheorem{thm}{Theorem}
\numberwithin{thm}{subsection}
\newtheorem{lem}[thm]{Lemma}
\newtheorem{prp}[thm]{Proposition}
\newtheorem{definition}[thm]{Definition}
\newtheorem{corollary}[thm]{Corollary}

\DeclareGraphicsRule{*}{mps}{*}{}

\begin{document}

\title{On~complexity of~multidistance graph recognition in~$\R^1$}
\author{Mikhail Tikhomirov}
\date{}
\maketitle

\begin{abstract}
Let $\A$ be a~set of~positive numbers. A~graph~$G$ is called an~$\A$-embeddable graph in~$\R^d$ if the~vertices of~$G$ can be positioned in~$\R^d$ so that the~distance between endpoints of~any edge is an~element of~$\A$. We consider the computational problem of recognizing~$\A$-embeddable graphs in~$\R^1$ and classify all finite sets~$\A$ by~complexity of~this~problem in~several natural variations.
\end{abstract}

\section{Introduction}

\subsection{Problem statement and motivation}

Let~$\A \subseteq \R_{>0}$ be a~set of~\emph{admissible} distances. For~a~set of~points~$S \subseteq \R^d$ we construct an~\emph{$\A$-distance graph of $S$} as a~graph with~vertices in~$S$ and edges between all pairs of~vertices at~admissible distances. A~generic graph will be called an~\emph{$\A$-distance graph in~$\R^d$} if it is isomorphic to an~$\A$-distance graph of~a~subset of~$\R^d$.

The~notion of~an~$\A$-distance graph is inspired by~classic unit-distance graphs and is, indeed, a~proper generalization since putting~$\A = \{1\}$ yields exactly the~unit-distance graphs. Unit-distance graphs appear in~many classical problems such as Erd\H{o}s' unit distance problem (see~\cite{erdos}), Nelson---Hadwiger problem of~the~chroma\-tic number of~the~plane (see \cite{raigorborsuk}). For a comprehensive survey of these (and many other) discrete geometry problems see~\cite{brass}; a~survey of~results concerning unit-distance graphs can be found in~\cite{raigorcolor,raigorcliques,raigorcoding}. Some isolated properties of~$\A$-distance graphs were studied for~finite sets~$\A$ (see~\cite{raigorcolor,kupav,gorskaya}).

In~literature unit-distance graphs and objects of~similar nature appear under different names, such as~linkages~(\cite{schaefer}), embeddable~(\cite{saxe}), or realizable~(\cite{schaefer}) graphs. Also, the~term ``unit-distance graph'' is sometimes applied to~slightly different objects (e.g., in~\cite{horvat}). Before going further, we find it convenient to~unify the~different notions and extend them to~the~multidistance case.

Let~$G = (V(G), E(G))$ be a~graph. If~$\phi: V(G) \to \R^d$ is such a~map that Euclidean~distance between~$\phi(x)$ and~$\phi(y)$ is an~element of~$\A$ for~all edges~$xy \in E(G)$, then we will say that~$\phi$ is an~\emph{$\A$-embedding of~$G$ in~$\R^d$}.  We will call~$\phi$ an~\emph{injective~$\A$-embedding} if it maps distinct vertices to~distinct points of~$\R^d$ (that is,~$\phi$ is an~injective map). If for~any pair~$xy \not \in E(G)$ we have that the~distance between~$\phi(x)$ and~$\phi(y)$ is not an~element of~$\A$, then~$\phi$ will be called a~\emph{strict~$\A$-embedding}. Naturally, a~graph is~\emph{(strictly) (injectively)~$\A$-embeddable in~$\R^d$} if it admits an~(strict) (injective)~$\A$-embedding in~$\R^d$. Note that strictly injectively~$\A$-embeddable graphs are exactly the~$\A$-distance graphs as~defined above.

We now consider the~computational problem of~recognizing~$\A$-embeddable graphs in~$\R^d$. Throughout the~paper we consider the~distance set~$\A$, the dimension~$d$, as~well as~the~choice of~one of~the~embeddability types (arbitrary, strict, injective, strict and injective) to~be fixed parameters and not~parts of~the~input.

The~complexity of~the~unit-distance ($\A = \{1\}$) case is studied in~\cite{saxe,horvat,schaefer,tikhomirov}: all variations of~the~problem are in~P for~$d = 1$, and are NP-hard for~$d \geq 2$. Another~example of~a~well-studied case is the~case of~$(0, 1]$-distance graphs, usually called \emph{unit ball graphs}. In~the~$d = 1$ case (real line embedding)~$(0, 1]$-distance graphs are~\emph{unit interval graphs}; they are recognizable in~linear time~(\cite{booth,looges}). Re\-cog\-ni\-zing~$(0, 1]$-distance graphs in~the~plane is~NP-hard~(\cite{breu}) and~even hard for~the~existential theory of~reals~(\cite{kang}). An~interesting approach of~\cite{hlinveny} based on~dense lattices in~$\R^d$ allows to~establish NP-hardness~of~$(0, 1]$-embeddability for~$d = 3, 4, 8, 24$.

The~present paper is concerned with the~most ``primitive'' case of~the~$\A$-embeddable graph recognition problem with~$d = 1$ and finite distance sets.
For~each finite set~$\A$ and~each embedding type we classify the~corresponding problem as~belonging to~P or an~NP-complete. Note that since the~set~$\A$ is fixed, all functions that depend only on~$\A$ rather than on~the~input graph are constant in~complexity estimates.

\subsection{Statement of the results}

Let~$\A$ be a~non-empty finite set of~non-zero real numbers. Suppose further that~$\A = -\A$, that is,~$x \in \A$ implies~$-x \in \A$. Let~$G=\langle \A \rangle_+$ be the~additive group generated by~elements of~$\A$. A~graph~$X$ is~$\A$-embeddable in~$\R^1$ if and~only if a~homomorphism of certain type exists between the graphs $X$~and~$\Gamma = \Cay(G, \A)$~--- the~{\it Cayley graph} of~the~group~$G$ with~the~generating set~$\A$. We recall that~by~definition~$\Gamma = (V(\Gamma), E(\Gamma))$ with~$V(\Gamma) = G$ and~$E(\Gamma) = \{\{ x, gx \} \mid x \in G, g \in \A\}$.

The~group~$G$ is~a~free finitely generated abelian group, hence it is isomorphic to~$\Z^k$ for~an~integer~$k \geq 1$. In~the~sequel we identify each element of~$G$ with~the~element of~$\Z^k$ being its image under~a~certain canonically chosen group isomorphism.

The~following theorem provides a~complete classification of~finite sets~$\A$ depending on~the~complexity of~$\A$-embeddability checking in~$\R^1$.

\begin{megathm} \label{summary}

\begin{enumerate}[label=(\alph*)]

\item The~problem of~$\A$-embeddability checking in~$\R^1$ is in~P if the~graph~$\Gamma$ is~bipartite, otherwise the~problem is~NP-complete.

\item The~problem of~strict~and/or injective~$\A$-embeddability checking in~$\R^1$ is in~P if~$\langle \A \rangle_+ \sim \Z$, otherwise the~problem is~NP-complete.

\end{enumerate}

\end{megathm}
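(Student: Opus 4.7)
The plan is to prove the four cases of the theorem by the usual template: a polynomial-time algorithm for each ``easy'' side, and a reduction from an NP-hard problem for each ``hard'' side. I would first note that membership in NP is uniform across all variants: after translating one fixed base vertex of each connected component of~$X$ to the origin, any valid embedding takes integer values in~$\langle \A \rangle_+ \sim \Z^k$ with coordinates bounded by~$|V(X)| \cdot \max\{\|a\| : a \in \A\}$, giving a polynomial-size certificate.

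For part (a) in the bipartite case, the Cayley graph~$\Gamma$ is $2$-colorable iff there is a group homomorphism $\pi\colon \langle \A \rangle_+ \to \Z/2$ sending every element of~$\A$ to~$1$, and pulling back~$\pi$ through any $\A$-embedding then produces a bipartition of~$X$. So $\A$-embeddability forces~$X$ bipartite; conversely, any bipartite~$X$ is $\A$-embeddable by sending one side of its bipartition to~$0$ and the other to any fixed~$a \in \A$. The problem thus reduces to bipartiteness and is decidable in linear time. For the non-bipartite case, $\Gamma$ contains an odd cycle and the decision problem becomes $\Gamma$-coloring for the infinite graph~$\Gamma$. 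I would reduce from $H$-COL for a suitable finite non-bipartite subgraph~$H$ of~$\Gamma$ (or from $k$-COL with~$k$ determined by the odd girth of~$\Gamma$); the main obstacle here is that~$\Gamma$'s infinite structure offers many more homomorphisms than~$H$ alone, so one must design gadgets that force every input vertex to land inside a pre-specified finite region of~$\Gamma$, and the precise form of these gadgets depends subtly on the arithmetic of~$\A$, in particular on its odd girth.

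For part (b) in the polynomial case, after rescaling~$\A \subset \Z_{>0}$ with $\gcd(\A) = 1$ and setting $D := \max \A$, any injective $\A$-embedding uses distinct integer positions, and at most $D+1$ vertices fit in any window of length~$D$. A left-to-right sweep of the real line, carrying a dynamic-programming state that records the vertex-to-position assignment on the current window, decides embeddability: the state space has size~$(|V|+1)^{D+1}$, polynomial because~$D$ is a constant, and transitions verify the edge constraints between any newly placed vertex and the window (plus, in the strict variant, the absence of forbidden $\A$-distances). The strict-but-not-injective variant reduces to the strict-injective one by first collapsing twin vertices, which in any strict embedding must occupy the same point; disjoint components are scheduled along the line by the same sweep.

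For the NP-hard side of part (b), $\langle \A \rangle_+$ has rank~$\geq 2$, so~$\A$ contains two $\Q$-linearly independent elements $a_1, a_2$, and the map $(p,q) \mapsto pa_1 + qa_2$ embeds a $2$D integer lattice injectively into~$\R$. I would reduce from a $2$D geometric NP-hard problem, such as subgraph isomorphism into the integer grid or a rectilinear planar variant of $3$-SAT, by encoding each grid edge as an $a_1$- or $a_2$-distance constraint in the $1$D embedding. The obstacle is to suppress unintended $\A$-distances: integer combinations $pa_1 + qa_2$ can coincide with elements of~$\A$ and create spurious adjacencies in the strict variant, so separator gadgets must isolate the ``grid region'' of the line from the rest. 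Together with the gadget construction of part (a), this gadget engineering is the principal technical content of the proof.
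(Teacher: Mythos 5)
Your high-level decomposition (poly algorithm for the easy side, reduction from an NP-hard problem for the hard side, uniform NP membership via bounded integer certificates) matches the paper, and your direct argument for part (a) in the bipartite case is in fact nicer than what the paper does: the observation that $\Gamma$ bipartite forces every $\A$-embeddable $X$ to be bipartite via pullback of the parity homomorphism, while any bipartite $X$ maps to $\{0, a\}$, cleanly reduces to bipartiteness testing without invoking the dichotomy machinery. However, the two hard directions are where you openly flag obstacles, and those obstacles are precisely the content of the paper.

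For part (a), non-bipartite case, you propose reducing from finite $H$-coloring but correctly note that you would need gadgets to confine the image to a finite region of $\Gamma$; you leave this unresolved. The paper sidesteps this entirely by citing a dichotomy theorem of MacGillivray for $H$-coloring with \emph{infinite} $H$ of bounded degree, which applies directly to $\Gamma = \Cay(G, \A)$. Your route would require new gadgets whose existence you have not argued for; as stated, this is a genuine gap.

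For part (b), polynomial case, your sliding-window DP is the right intuition but underspecified: tracking only the window-to-vertex assignment does not obviously prevent a vertex from being placed twice (once, leaving the window, then again far to the right), and the argument that connectivity of $X$ rules this out needs to be made. The paper instead observes that the relevant finite ball of $\Cay(\Z,\A)$ has pathwidth at most $D$ and cites Matou\v{s}ek--Thomas's polynomial algorithm for subgraph isomorphism when the host has bounded treewidth and the connected pattern has bounded degree; this packages exactly the bookkeeping your DP would have to do. Your reduction of strict-non-injective to strict-injective by collapsing twin vertices matches the paper's Proposition~\ref{notinj}.

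For part (b), rank $\geq 2$, you say you would encode a 2D grid problem using two $\Q$-independent generators, and you correctly identify the central obstacle: suppressing unintended $\A$-distances and forcing the embedding to ``look like'' a planar configuration. But you give no mechanism for this. The paper's main technical contribution (all of Section~\ref{balls}, culminating in Theorem~\ref{finiteball} and Corollary~\ref{finitebundle}) is exactly the tool needed: it shows that for a sufficiently large radius every automorphism of a ball in $\Gamma$ is the restriction of a linear automorphism of $\Gamma$, so that ball bundles are $\Gamma$-rigid and can serve as physical ``rods'', ``flags'', and ``frames''. With that rigidity in hand, the paper implements a logic-engine reduction from NAE-3-SAT (with separate constructions for $k=2$ and $k>2$). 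Without an analogue of the rigidity theorem, your separator gadgets have no foundation, so this part of your proposal is a sketch of an intent rather than a proof.
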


Note that~$\langle \A \rangle_+ \sim \Z$ if~and~only if all pairwise quotients of~elements of~$\A$ are rational, or, equivalently,~$\A \subset \alpha \Z$ for~a~real~$\alpha \neq 0$.

The~(a) part of~Theorem~\ref{summary} is~an~immediate corollary of the~result~\cite{macgillivray} on the~com\-ple\-xi\-ty of~$H$-coloring for~infinite graphs~$H$ of~bounded degree. It is possible to obtain a more explicit condition in terms of elements of $\A$:

\begin{prp}\label{gammabipart}

If $\A \subset \Z^k$ is a symmetrical generating set of $\Z^k$, then $\Gamma = \Cay(\Z^k, \A)$ is bipartite iff there is a subset $I \subseteq \{1, \ldots, k\}$ such that for each $x = (x_1, \ldots, x_k) \in \A$ we have that $\sum_{i \in I} x_i$ is odd.

\end{prp}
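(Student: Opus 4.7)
The plan is to reformulate bipartiteness of the Cayley graph $\Gamma$ in terms of the existence of a group homomorphism $f \colon \Z^k \to \Z/2\Z$ sending every element of $\A$ to $1$. Any such homomorphism is uniquely determined by its values on the standard basis $e_1, \ldots, e_k$ of $\Z^k$, and hence has the form $f(x) = \sum_{i \in I} x_i \pmod 2$ for $I = \{i : f(e_i) = 1\}$. Consequently the existence of a suitable $I$ is equivalent to the existence of a suitable $f$.

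For the easy direction, given such an $I$ (equivalently, such an $f$), the preimages $f^{-1}(0)$ and $f^{-1}(1)$ form a bipartition of $\Gamma$: for any edge $\{x, x+a\}$ with $a \in \A$, we have $f(x+a) - f(x) = f(a) = 1 \pmod 2$, so the two endpoints lie in different classes.

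For the converse, I would suppose $\Gamma$ is bipartite and construct $f$ directly. Since $\A$ generates $\Z^k$, every $y \in \Z^k$ admits a representation $y = a_1 + \cdots + a_m$ with $a_i \in \A$; tentatively set $f(y) := m \bmod 2$. The main obstacle is verifying that $f$ is well defined: if $y = b_1 + \cdots + b_n$ is another such expression, then $a_1 + \cdots + a_m + (-b_1) + \cdots + (-b_n) = 0$ exhibits $0 \in \Z^k$ as a sum of $m+n$ elements of $\A$ (using $-\A = \A$), which corresponds to a closed walk of length $m+n$ in $\Gamma$ based at the origin. Bipartiteness forbids odd closed walks, forcing $m + n$ to be even and therefore $m \equiv n \pmod 2$. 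Once well-definedness is established, additivity of $f$ is immediate, so $f$ is a homomorphism $\Z^k \to \Z/2\Z$; moreover $f(a) = 1$ for every $a \in \A$ by construction. The set $I = \{i : f(e_i) = 1\}$ then satisfies the required property.
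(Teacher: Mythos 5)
Your proof is correct. Both directions go through, and the forward direction is essentially what the paper does (define the bipartition by the parity of $\sum_{i \in I} x_i$). The converse, however, takes a genuinely different route. The paper works from the bipartition itself: it observes that the bipartition is unique, invokes vertex transitivity (translation by $e_i$ is an automorphism of $\Gamma$, hence either preserves or swaps the two parts), defines $I$ as the set of basis indices whose translation swaps parts, and then reads off the parity condition for elements of $\A$. You instead build a homomorphism $f \colon \Z^k \to \Z/2\Z$ from scratch by assigning each $y$ the parity of its word length over $\A$, and you verify well-definedness directly via the standard fact that a bipartite graph has no closed walk of odd length. Your route avoids appealing to uniqueness of the bipartition or to vertex transitivity, relying only on the elementary odd-closed-walk lemma and the universal property of $\Z^k$; the paper's route is shorter once transitivity is granted but leans on properties of Cayley graphs that your argument does not need. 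Both are sound; yours is the more self-contained of the two.
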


\begin{proof}

If there is a set $I$ that satisfies the premise, then $\Gamma$ is bipartite with parts~$\{G_0, G_1\}$ defined by $G_j = \{x \in \Z^k \vert \sum_{i \in I} x_i \equiv j~(\bmod~2)\}$. Conversely, let $\Gamma$ be bipartite with parts $\{G_0, G_1\}$, i.e. $G_0, G_1 \subseteq \Z^k$, $G_0 \cup G_1 = \Z^k$, $G_0 \cap G_1 = \varnothing$, and for any edge $xy$ of $\Gamma$ neither $G_0$ nor $G_1$ contains both $x$ and $y$. Note that if $\Gamma$ is bipartite, then there is only one correct partition. Without loss of generality, assume that $0 \in G_0$. Consider a basis element $e_i \in \Z^k$, with all coordinates except $i$-th equal to 0, and $i$-th coordinate equal to 1. If $e_i \in G_0$, then by vertex transitivity we must have that for any $x \in \Z^k$ the elements $x$ and $x + e_i$ belong to the same part. If $e_i \in B$, then $x$ and $x + e_i$ must belong to different parts for any $x$. Define $I = \{i \in \{1, \ldots, k\} \vert e_i \in G_1\}$. For any element $x \in \A$ we must have $x \in G_1$ since $0x$ is an edge of $\Gamma$, hence $\sum_{i \in I} x_i$ is odd, which concludes the proof.

\end{proof}

The~case~$\langle \A \rangle_+ \sim \Z$ of~the~(b)~part follows from~the~result~\cite{matousek} on~the~time-polynomial~solution of~{\bf SUB\-GRAPH-ISOMORPHISM} for~graphs of~bounded tree\-width; the~details are~given in~Sec\-tion~\ref{z1} (a~discussion of~treewidth and~a~survey of~relevant algorithmic results can be~found in~\cite{bodlaendertourist}). The~bulk of~the~present paper is dedicated to~proving NP-comp\-lete\-ness of~strict~and/or~injective $\A$-embeddability checking in~$\R^1$ in~the~case~$\langle \A \rangle_+ \sim \Z^k$ with~$k \geq 2$. Let us outline the scheme of the proof.

In~Section~\ref{balls} we study automorphisms of~the~Cayley graph~$\Gamma$ and~its finite subgraphs. The~main result of~the~section is Theorem~\ref{finiteball} that asserts existence of~finite subgraphs of~$\Gamma$ such~that each of~their automorphisms acts linearly on~elements of~$\Z^k$ and can be extended uniquely to~an~automorphism of~the~full graph~$\Gamma$. The~existence of~``$\Gamma$-rigid'' subgraphs provided by~Theorem~\ref{finiteball} allows us to~avoid most of~the~complications arising from~the~``graphical'' nature of~$\A$-embeddability and lead the~discussion of~the~subsequent constructions in~geometric terms.

In~order to~establish NP-completeness, we implement the~``logic engine'' setup (see Section~\ref{enginedesc}) to~reduce from~the~NP-complete {\bf NAE-3-SAT} problem to~strict and/or injective~$\A$-embeddability checking via an intermediate problem {\bf LOGIC-ENGINE}. In~Section~\ref{zk} we describe the~reduction from~logic engine realizability to~each case of~strict~and/or~injective~$\A$-embeddability checking in $\R^1$ using two different logic engine implementations for~the~cases~$k = 2$ and~$k > 2$.

Note that all embeddability problems in~$\R^1$ belong to~NP since they are equivalent to~the~$\Gamma$-coloring problem  (with~possible additional constraints) which admits polynomial certificate, namely, integer coordinates of~corresponding elements of~$\Z^k$.

\subsection{The~$G \sim \Z$~case, strict~and/or injective~$\A$-embeddability} \label{z1}

Since~$\langle \A \rangle = \Z$, the elements~of~$\A$ become mutually coprime integers under~a ca\-no\-ni\-cal group iso\-morphism of~$G$ and~$\Z$. Let us put~$D = \max \{ |x|: x \in \A \}$.

First consider the~injective (possibly non-strict) embeddability case.

\begin{lem}

Let $H$ be a~finite subgraph of~$\Gamma = \Cay(\Z, \A)$. Then the treewidth (and, moreover, the pathwidth) of~$H$ is at~most~$D$.

\end{lem}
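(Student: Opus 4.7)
The plan is to give an explicit path decomposition of $H$ whose bags are ``sliding windows'' of length $D+1$ on the real line, and then read off the width.

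First I would reduce to a bounded situation. Since $V(H) \subseteq \Z$ is finite, let $m = \min V(H)$ and $M = \max V(H)$. For each integer $i$ with $m \leq i \leq M$, define the bag
\[
    B_i = \{\, v \in V(H) : i \leq v \leq i + D \,\}.
\]
I would organize these bags along the path $B_m - B_{m+1} - \cdots - B_M$.

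Next I would verify the three path-decomposition axioms. (i) Vertex coverage: any $v \in V(H)$ lies in $B_v$ (and $m \leq v \leq M$). (ii) Edge coverage: for an edge $xy$ of $H$, by the definition of $\Gamma$ we have $|x - y| \in \A$, hence $|x - y| \leq D$; assuming $x \leq y$, both $x$ and $y$ lie in $B_x$ since $x \leq x$ and $y \leq x + D$, and $m \leq x \leq M$. (iii) Interval property: for any fixed $v$, the set of indices $i$ with $v \in B_i$ is exactly $\{i : v - D \leq i \leq v\} \cap [m, M]$, a contiguous range of integers.

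Finally I would bound the size of each bag. Since $B_i$ is contained in the set of integers in $[i, i+D]$, we have $|B_i| \leq D + 1$, so the width of this path decomposition is at most $D$. This yields $\mathrm{pw}(H) \leq D$, and since treewidth is at most pathwidth, also $\mathrm{tw}(H) \leq D$.

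There is no substantial obstacle here: the whole argument is a direct use of the fact that all edges of $\Gamma = \Cay(\Z, \A)$ have geometric length at most $D$, so a sliding interval of length $D$ separates the ``past'' from the ``future'' and carries a valid path-decomposition structure. The only mildly delicate point is making sure the index range is finite and that the interval property is stated for the intersection with $[m, M]$, both of which are handled by the truncation above.
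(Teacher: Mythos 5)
Your proposal is correct and takes essentially the same approach as the paper: both build a path decomposition from sliding windows $[i, i+D]$ of length $D+1$, using that every edge of $\Cay(\Z,\A)$ spans at most $D$. The only cosmetic difference is that the paper first passes (via monotonicity of pathwidth under subgraphs and translation) to the induced subgraph on $\{0,\dots,M\}$, whereas you work directly with the possibly-sparse vertex set and intersect the windows with $V(H)$; both are fine.
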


\begin{proof}

Note that adding edges to~a~graph does~not decrease its path- or treewidth. Without loss of~generality, suppose that~the~subgraph~$H = (V, E)$ is~induced by~a~vertex set~$V = \{0, \ldots, M\}$. If $M < D$, then take the~trivial path decomposition with~a~single vertex $V$. This decomposition has width~$M < D$, hence~the~claim holds.

Now suppose that~$M \geq D$. We will build a~path decomposition~$P$ of~the~graph~$H$ with width~$D$. Take subsets~$A_x = \{x, \ldots, x + D\}$ for~all integer~$x$ from~0 to~$M - D$ as~vertices of~$P$. The~edges of~$P$ will connect subsets that are different in~a~single element.

Let us ensure that~$P$ is indeed a path decomposition of~$H$. Clearly~$P$ is~a~path, and~for~every vertex~$x \in V$ the~vertices~of~$P$ containing~$x$ form~a~subpath. Suppose that~$xy \in E$ and~$x < y$. Then~$y \leq x + D$ and~$x, y \in A_x$, hence both endpoints of~any edge of~$H$ are covered by~a~vertex~of~$P$. Thus all requirements of~a~path decomposition are met. Finally, it can easily be seen that the~width~of~$P$ is~$D$.

\end{proof}

Suppose that~a~connected graph~$X = (V(X), E(X))$ is the~input to~the~$\A$-em\-bed\-da\-bi\-li\-ty checking problem, and~$Y$ is~a~subgraph~of~$\Gamma$ induced~by~the~vertex set~\{$0$, \ldots, $|V(X)| \cdot D$\}. Clearly,~the~graph~$X$ is~(strictly/non-strictly) injectively~$\A$-embeddable in~$\R^1$ if and~only if~$X$ is isomorphic to~an~(induced/non-induced) subgraph of~$Y$.

The~following result is due to~\cite{matousek}: suppose that the~maximal degree of~a~connected graph~$X$ is bounded by~a~constant~$\Delta$, and a~graph~$Y$ has treewidth bounded by~a~constant~$k$, then finding an~(induced/non-induced) subgraph~of~$Y$ that is isomorphic to~$X$ can be done in~$O(|V(X)|^{k + 1} |V(Y)|)$ time.
The~maximal degree of~the~graph~$Y$ is at~most~$2D$, thus we can assume that the~maximal degree of~$X$ is at~most~$2D$ as~well (otherwise~$X$ can not be isomorphic to~a~subgraph of~$Y$). Further, by~the~previous lemma the~treewidth of~$Y$ is at~most~$D$. Thus, by~using the~algorithm of~\cite{matousek}, we obtain an~algorithm for~checking injective (strict/non-strict)~$\A$-embeddability in~$O(|V(X)|^{D + 2})$ time.

Finally, consider the case of~strict non-injective~$\A$-embeddability. Let~$N(v)$ denote the~set of~neighbours of~a~vertex~$v$ in~the~graph~$X$. We will say that vertices~$v, u \in V(X)$ are~\emph{equivalent} if~$N(v) = N(u)$, and will write $v \sim u$.

\begin{prp} \label{notinj}

Suppose that~$V'$ is~a~subset of~$V(X)$ that contains a~single vertex from~each equivalence class of~$V(X)$, and $X'$~is~the~subgraph of~$X$ induced by~$V'$. Then the~graph~$X$ is strictly~$\A$-embeddable in~$\R^1$ if and~only if~$X'$ is strictly injectively~$\A$-embeddable in~$\R^1$.

\end{prp}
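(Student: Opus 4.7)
The plan is to pass embeddings between $X$ and $X'$ using the natural collapse $\pi: V(X) \to V'$ that sends each vertex to the representative of its $\sim$-class. Two preliminary observations underlie everything: since $\A$ consists of nonzero reals, $\Gamma$ is loopless, so whenever $v \sim u$ with $v \neq u$ the equality $N(v) = N(u)$ together with $u \notin N(u)$ forces $vu \notin E(X)$; and $\sim$ is genuinely an equivalence relation, so $\pi$ is well-defined.

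For the \emph{only if} direction I would start from a strict $\A$-embedding $\phi$ of $X$ and propose $\phi|_{V'}$ as a strict injective $\A$-embedding of $X'$. Strictness transfers for free because $X'$ is an induced subgraph. The crux is injectivity: if $\phi(v) = \phi(u)$ for distinct $v, u$, then every third vertex $w$ satisfies $|\phi(w) - \phi(v)| = |\phi(w) - \phi(u)|$, so by strictness $w \in N(v)$ iff $w \in N(u)$; together with $vu \notin E(X)$ (since $0 \notin \A$) this yields $N(v) = N(u)$, that is $v \sim u$. Distinct elements of $V'$ represent distinct $\sim$-classes and therefore cannot be identified by $\phi$.

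For the \emph{if} direction I would extend a strict injective $\A$-embedding $\phi'$ of $X'$ to $\phi := \phi' \circ \pi$ on all of $V(X)$ and verify the strict-embedding condition for every pair $v \neq u$. When $v \sim u$ we have $\phi(v) = \phi(u)$ and $vu \notin E(X)$, which is consistent with $0 \notin \A$. When $v \not\sim u$, using $N(v) = N(\pi(v))$ and $N(u) = N(\pi(u))$ we obtain
\[
vu \in E(X) \iff u \in N(\pi(v)) \iff \pi(v) \in N(\pi(u)) \iff \pi(v)\pi(u) \in E(X'),
\]
which reduces the required equivalence to the strictness of $\phi'$.

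The only non-routine step is the forward-direction injectivity argument, which amounts to recognising that the vertices a strict embedding may identify are precisely $\sim$-twins; every other part is direct bookkeeping from the definitions, and I do not anticipate any serious obstacle.
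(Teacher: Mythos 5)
Your proof is correct and follows essentially the same route as the paper: restrict $\phi$ to $V'$ in one direction, and compose $\phi'$ with the collapse map $\pi$ (the paper's $R$) in the other. If anything your writeup is a bit more complete, since the paper's converse direction explicitly verifies only the edge condition and leaves the strictness of the extended map implicit, whereas you split into the $v\sim u$ and $v\not\sim u$ cases and check both the embedding and the strictness requirements.
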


\begin{proof}

Suppose that~$\phi$ is a~strict~$\A$-embedding of~$X$ in~$\R^1$. If~$\phi(v) = \phi(u)$, then by~strictness of~$\phi$ for every other vertex~$w \in V(X)$ the~edges~$vw$ and~$uw$ are either both inside or both outside of~$E(X)$, and we must have~$v \sim u$. Since no two distinct vertices of $V'$ are equivalent, then the~restriction~$\phi|_{V'}$ is a~suitable strict injective~$\A$-embedding of~the~graph~$X'$.

Conversely, consider a~strict injective~$\A$-embedding~$\phi'$ of~the~graph~$X'$ in~$\R^1$. Define an~embedding~$\phi: V(X) \to V(\Gamma)$ by~$\phi(v) = \phi'(R(v))$, where~$R(v)$ is the~only vertex of $V'$ that satisfies $v \sim R(v)$. If~$vu \in E(X)$, then we must have~$R(v)R(u) \in E(X)$. But~$\phi$ is strict, hence~$\phi(x) - \phi(y) = \phi(R(x)) - \phi(R(y)) \in \A$, thus~$\phi$ is a~strict~$\A$-embedding.

\end{proof}

The~graph~$X'$ can be easily constructed by~$X$ in~polynomial time, and~strict injective~$\A$-embeddability of~$X'$ can be checked in~polynomial time in~the~$k = 1$ case. Thus the~first half of~the~(b) part of~Theorem~\ref{summary} is proven.

\section{Balls in $\Gamma$ and their automorphisms} \label{balls}

\subsection{Balls and embeddings}

Recall that~$G \sim \Z^k$ is~a~free finitely generated abelian group,~$\A$ is a~finite generating set of~$G$, and~$\Gamma = \Cay(G, \A)$.

Let~$x, y \in G$. Since for~each~$d \in G$ the~translation~$x \rightarrow x + d$ is an~automorphism of~$\Gamma$, the~length of~the~shortest path between vertices~$x$ and~$y$ in~the~graph~$\Gamma$ depends only on~$x - y$; let~$\rho^{\A}(x - y)$ denote this length ($\rho^{\A}$~is~the~same as~the~\emph{word metric} of~the~group~$G$ with~the~generating set~$\A$). Also let~$\omega^{\A}(x - y)$ denote the~number of~shortest paths between~$x$ and~$y$ in~$\Gamma$. In the sequel we will omit the~upper index~$\A$  when the~set~$\A$ is clear from~the~con\-text.

If~$x \in G$ and $V \subset G$, then we will write~$V + x$ for~a~copy of~$V$ translated by~$x$ element-wise. Similarly, if $\lambda$ is a~subgraph of~$\Gamma$, we will write~$\lambda + x$ for~a~subgraph obtained from~$\lambda$ by~shifting all vertices and~endpoints of~edges by~$x$.

\begin{definition}

Suppose that~$r$ is~a~non-negative integer. Let~$B^{\A}_r$ denote~the~subgraph of~$\Gamma$ induced by~the~vertex set~$V(B^{\A}_r) = \{x \in G: \rho^{\A}(x) \leq r\}$. We will call the~graph~$B^{\A}_r$ the~\emph{ball of~radius~$r$}.

\end{definition}

\begin{prp} \label{balliso}

Suppose that~$\lambda$ is a~subgraph of~$\Gamma$, and~$\phi: B_r \to \lambda$ is a~graph isomorphism. Then~$\lambda = B_r + \phi(0)$.

\end{prp}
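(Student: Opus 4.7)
My plan is to first reduce to the centered case by exploiting the translation automorphisms of $\Gamma$. The map $x \mapsto x - \phi(0)$ is an automorphism of $\Gamma$ and sends $\lambda$ to $\lambda - \phi(0)$, so composing with $\phi$ yields an isomorphism $\phi' : B_r \to \lambda - \phi(0)$ with $\phi'(0) = 0$. Hence it suffices to prove that when $\phi(0) = 0$ we have $\lambda = B_r$ (as subgraphs of $\Gamma$, i.e.\ as pairs of vertex and edge sets).

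The heart of the argument is a comparison of three graph metrics: the $\Gamma$-metric $\rho$, the intrinsic metric of $B_r$, and the intrinsic metric of $\lambda$. The key observation is that $B_r$ is \emph{distance-preserving} with respect to $0$: if $v \in V(B_r)$ and $0 = v_0, v_1, \ldots, v_m = v$ is a shortest path in $\Gamma$, then $\rho(v_i) \le i \le m \le r$, so every $v_i$ lies in $V(B_r)$. Therefore the distance from $0$ to $v$ inside $B_r$ equals $\rho(v)$. Since $\phi$ is a graph isomorphism preserving $0$, it carries this to: the distance from $0$ to $\phi(v)$ inside $\lambda$ equals $\rho(v)$. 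Because $\lambda$ is a subgraph of $\Gamma$, its intrinsic distance is at least the $\Gamma$-distance, giving $\rho(\phi(v)) \le \rho(v) \le r$ and hence $\phi(v) \in V(B_r)$. So $V(\lambda) \subseteq V(B_r)$, and since $B_r$ is finite (as $\A$ is finite) and $\phi$ is a bijection between the vertex sets, we in fact have $V(\lambda) = V(B_r)$.

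It remains to identify the edge sets. Any edge of $\lambda$ is an edge of $\Gamma$ whose endpoints lie in $V(\lambda) = V(B_r)$; since $B_r$ is the subgraph of $\Gamma$ \emph{induced} by $V(B_r)$, such an edge already belongs to $E(B_r)$. Thus $E(\lambda) \subseteq E(B_r)$, and equality follows from $|E(\lambda)| = |E(B_r)|$, which holds because $\phi$ is an isomorphism. Combining with the vertex equality gives $\lambda = B_r$, so in the general case $\lambda = B_r + \phi(0)$ as claimed.

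I do not expect a serious obstacle here: the only subtle point is the distinction between the three metrics and the need to verify that $B_r$ is a metric ball of $\Gamma$ in the strong sense above (shortest $\Gamma$-paths to $0$ lie inside $B_r$). Everything else is a routine consequence of $B_r$ being an induced subgraph and $\Gamma$ being vertex-transitive via translations.
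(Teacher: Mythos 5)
Your argument is correct and essentially the same as the paper's: you show $V(\lambda) \subseteq V(B_r + \phi(0))$ by a distance-preservation argument, upgrade to equality by cardinality, then use that $B_r + \phi(0)$ is induced to get $E(\lambda) \subseteq E(B_r + \phi(0))$ and again upgrade by cardinality. The initial translation to center at $0$ and the explicit remark that shortest $\Gamma$-paths from $0$ stay inside $B_r$ are helpful clarifications, but the substance matches the paper's proof.
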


\begin{proof}

Since for~any path~$v_0$, \ldots, $v_s$ in~the~graph~$B_r$ there is a~path~$\phi(v_0)$,~\ldots, $\phi(v_s)$ in~the~graph~$\lambda$, then for~any vertex~$x \in V(B_r)$ we have
\[\rho(\phi(x) - \phi(0)) \leq \rho(x) \leq r,\]
hence~$V(\lambda) \subseteq V(B_r + \phi(0))$. But
\[|V(\lambda)| = |V(B_r)| = |V(B_r + \phi(0))|,\] thus we have~$V(\lambda) = V(B_r + \phi(0))$. Edges of~$B_r$ and~$\lambda$ are in~one-to-one cor\-res\-pon\-dence, thus
\[|E(B_r)| = |E(\lambda)| \leq |E(B_r + \phi(0))| = |E(B_r)|,\] and~$E(\lambda) = E(B_r + \phi(0))$.

\end{proof}

We are interested in~possible \emph{embeddings} of~the~ball of~radius~$r$ in~the~graph~$\Gamma$, that is, graph isomorphisms~$\phi: B_r \to B_r + \phi(0)$. Since~$\Gamma$ is vertex-transitive, it suffices to~consider the~group~$\Aut(B_r)$, because every embedding~$\phi$ is composed of an~auto\-mor\-phism of~$B_r$ and a~translation by~$\phi(0)$.

Consider~the~group~$\Aut_0(\Gamma)$ of~automorphisms of~$\Gamma$ that stabilize the~origin. It~follows from the results of~\cite{ryabchenko} that each automorphism~$\phi \in \Aut_0(\Gamma)$ is additive (that~is, satisfies~$\phi(x + y) = \phi(x) + \phi(y)$ for~all~$x, y \in G$), hence it is unambigiosly determined by~the~values~$\phi(a)$ on~all~$a \in \A$.

\emph{Linearity} is a~stronger property of~an~automorphism. We will say that $\phi \in \Aut_0(\Gamma)$ is linear if there exists a~non-degenerate linear map~$T: \R^k \to \R^k$ such that~$\phi(x) = T(x)$ for~all~$x \in G$. Clearly, linearity implies additivity.

From each~$\phi \in \Aut_0(\Gamma)$ we can construct an~element of~$\Aut(B_r)$, namely, the~re\-stric\-tion~$\phi|_{V(B_r)}$. Since an~element of~$\Aut_0(\Gamma)$ is induced by its values on~$\A$, we have that distinct elements of~$\Aut_0(\Gamma)$ have distinct restrictions on~$B_r$ (if~$r$ is positive). It should be noted, however, that in many cases the graph $B_r$ admits different kinds of automorphisms: for example, if $\A$ is a standard basis of $\Z^k$ (after adjoining the inverse elements) and $r = 1$, then the group $\Aut_0(\Gamma)$ is isomorphic to $\S_k \times \Z_2^k$ since each automorphism can freely exchange the axes and/or flip their directions. At the same time, we have that~$\Aut(B_r)$ is isomorphic to~$\S_{2k}$ since the~graph~$B_r$ is isomorphic to~$K_{1, 2k}$. Moreover, for~each integer~$r > 0$ we can construct a~set~$\A$ such that~$\Aut(B_r) \not \sim \Aut_0(\Gamma)$: if we take, for example,\[\A = \{\pm(2r + 1, 0), \pm(0, 2r + 2), \pm(1, 1)\},\] then the~graph~$B_r$ is~isomorphic to~a~ball of~radius~$r$ in~the~lattice~$\Z^3$ with~standard basis, hence~$\Aut(B_r) \sim \S_3 \times \Z_2^3$, while~$\Aut_0(\Gamma) \sim \Z_2$ and contains only the trivial automorphism and the central symmetry.

Theorem \ref{finiteball} shows that for~sufficiently large radius~$r$ the~group~$\Aut(B_r)$ precisely captures the~structure of~$\Aut_0(\Gamma)$, or, in~other words, each automorphism of~$B_r$ can be extended to~an~automorphism of~$\Gamma$ (in~this case, the~extension is~unique). Moreover, all automorphisms of~$B_r$ and~$\Gamma$ turn out to~be linear.

\begin{megathm} \label{finiteball}

Suppose that~$G$ is~a~free finitely generated abelian group, and~$\A \subset G$ is an~arbitrary finite symmetrical generating set of~$G$ that doesn't contain~0. Then there exists an~integer~$R(\A)$ such that for~every integer~$r \geq R(\A)$ each automorphism of~the~graph~$B_r$ is linear and has a unique extension to~an~element of~$\Aut_0(\Cay(G, \A))$.

\end{megathm}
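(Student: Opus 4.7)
The plan is to first identify the origin~$0$ as a canonical vertex of~$B_r$, which reduces matters to automorphisms fixing~$0$ and preserving~$\rho$, and then force linearity by an asymptotic-shape argument combined with a lattice-rigidity step. Uniqueness of the extension to~$\Aut_0(\Gamma)$ is automatic: an element of~$\Aut_0(\Gamma)$ is determined by its restriction to~$\A \subset V(B_r)$, so at most one linear extension exists.

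First I would show that for every~$r \geq 1$, the origin is the unique vertex of~$V(B_r)$ whose eccentricity in the induced graph~$B_r$ equals~$r$. From~$v = 0$, every geodesic in~$\Gamma$ stays in~$V(B_r)$ (partial sums along a shortest-path decomposition have~$\rho$-norm bounded by the current length), so~$d_{B_r}(0, u) = \rho(u) \leq r$. For~$v \neq 0$, the translate~$V(B_r) - v$ cannot be contained in~$V(B_r)$: iterating would put the infinite sequence~$\{-nv : n \geq 0\}$ into the finite set~$V(B_r)$. Hence some~$u \in V(B_r)$ satisfies~$\rho(u - v) > r$, giving~$d_{B_r}(v, u) \geq \rho(u - v) > r$. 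Since any~$\phi \in \Aut(B_r)$ preserves eccentricity, it fixes~$0$, and then~$\rho(\phi(v)) = d_{B_r}(0, \phi(v)) = d_{B_r}(0, v) = \rho(v)$ on~$V(B_r)$. In particular~$\sigma := \phi|_{\A}$ is a permutation of~$\A$.

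The critical step is to show that~$\sigma$ extends to some~$T \in \Aut_0(\Gamma)$ and that~$\phi = T|_{V(B_r)}$. I would argue by contradiction: if the theorem failed, one could find~$r_n \to \infty$ and~$\phi_n \in \Aut(B_{r_n})$ not agreeing with any common element of~$\Aut_0(\Gamma)$. Extracting a subsequence on which~$\sigma_n := \phi_n|_{\A}$ stabilizes to a fixed permutation~$\sigma$ (permutations of~$\A$ being a finite set), I would pass to the asymptotic cone: the word metric~$\rho$ differs from a polytope norm~$\|\cdot\|_P$ on~$\R^k$ by a bounded constant, where~$P = \lim_r (1/r) V(B_r)$ is the limit shape. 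The rescaled maps~$x \mapsto \phi_n(r_n x)/r_n$ are approximate~$\|\cdot\|_P$-isometries of~$P$ fixing the origin; by Arzel\`a--Ascoli and the Mazur--Ulam theorem a further subsequence converges to a linear isometry~$T$ of~$(\R^k, \|\cdot\|_P)$. Exploiting the~$r$-independence of~$\sigma$ and the fact that~$\phi_n$ preserves every level set~$\{x : \rho(x) = s\}$, one concludes~$T(\A) = \A$ and~$T|_{\A} = \sigma$, so~$T \in \Aut_0(\Gamma)$.

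To finish the contradiction I would show that~$\phi_n = T|_{V(B_{r_n})}$ pointwise for large~$n$. Setting~$\psi_n := T^{-1} \circ \phi_n$, we have~$\psi_n(0) = 0$ and~$\psi_n|_{\A} = \mathrm{id}$, and the goal reduces to~$\psi_n = \mathrm{id}$. I would proceed by induction on~$s = \rho(x)$: at stage~$s$, each~$x$ with~$\rho(x) = s$ has several predecessors~$y$ of norm~$s-1$ with~$x - y \in \A$, and the induction hypothesis~$\psi_n(y) = y$ combined with~$\psi_n(x) - y \in \A$ (over all such~$y$) together with~$\rho(\psi_n(x)) = s$ ought to pin~$\psi_n(x) = x$. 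The main obstacle is precisely this rigidity step: ``corner'' vertices with few or nearly collinear predecessors may admit extraneous candidates for~$\psi_n(x)$, so one needs an auxiliary structural lemma ensuring that, once~$r \geq R(\A)$, every~$x \in V(B_r)$ has enough linearly independent predecessors for the induction to close. This is where the threshold~$R(\A)$ genuinely enters.
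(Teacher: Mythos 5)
Your eccentricity argument that every automorphism of $B_r$ fixes the origin is a clean shortcut, and the uniqueness observation is standard. The real difficulty, however, is the step you yourself flag at the end: showing $\psi_n := T^{-1}\circ\phi_n = \mathrm{id}$ by an induction on $\rho(x)$ using ``enough linearly independent predecessors''. That ``auxiliary structural lemma'' is not a loose end to be tied up later --- it is essentially the entire content of the theorem, and the asymptotic-cone machinery gives you nothing toward it. A limiting argument with Arzel\`a--Ascoli and Mazur--Ulam can at best show that $\phi_n(r_n x)/r_n$ converges to a linear $T$, i.e.\ that $\phi_n$ agrees with $T$ up to an $o(r_n)$ error; it does not force \emph{exact} agreement on $V(B_{r_n})$, which is what is claimed. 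Worse, the naive predecessor-induction genuinely fails for some $(\A, r)$: the paper exhibits $\A = \{\pm(2r+1,0), \pm(0,2r+2), \pm(1,1)\}$ for which $B_r$ is isomorphic to a ball in $\Z^3$ and has extra automorphisms, so for that $r$ there \emph{are} non-identity $\psi$ with $\psi|_\A = \mathrm{id}$. Your scheme gives no mechanism for detecting the threshold beyond which this stops happening.

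The paper resolves exactly this obstruction with a different decomposition. It singles out the \emph{primary} elements of $\A$ (the vertices of $\Conv\A$, characterized by $\rho(tx)=t$, $\omega(tx)=1$), shows automorphisms of large balls rigidly permute the primary rays (Proposition~\ref{ballmain}), and introduces a consistent \emph{orientation} across ball bundles (Corollary~\ref{connballs}). This yields additivity on the sublattice generated by primary elements (Lemma~\ref{mainsum}), hence a linear map $T$ agreeing with $\phi$ there (Lemma~\ref{mainlinear}). For the complementary \emph{secondary} elements the restriction of $\phi$ is shown to be an automorphism of a smaller Cayley ball $B_{r'}^{\overline{\A'}}$, and one inducts on $|\A|$; a bounded-defect lemma (Proposition~\ref{deltanotmain}) then glues the two pieces over all of $V(B_r)$. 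That primary/secondary split and the induction on $|\A|$ are precisely what your proposal is missing, and there is no obvious way to conjure them from the compactness framework. As written, the proposal is a plausible strategy outline with the key lemma unproved, not a proof.
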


Let us outline the~proof of~Theorem~\ref{finiteball}. First, we establish that each automorphism of~a~sufficiently large ball stabilizes the~set of~vertices of~convex hull of~$\A$ (these vertices are called the~{\it primary} elements of~$\A$). We will refer to~the~corresponding permutation of~primary elements of~$\A$ as an~{\it orientation} of~the~automorphism. We also show that orientation is well-defined for any ``bundle'' of~balls, that is, the induced union of~balls with~a~connected set of~centers. We will use this to~show that each automorphism of~a~large enough ball is linear on~the~partial lattice induced by~primary elements.

On~the~other hand, we will show that the~restriction of~any automorphism to~the~lattice induced by non-primary ({\it secondary}) elements is an~automorphism of~the~ball in the Cayley graph of the additive group induced by the secondary elements. By~induction on~the~size of~$\A$, this implies linearity of~this~automorphism if the~ball is large enough. The final part of the proof is ``gluing'' of~the~two linear automorphisms and extending them to all vertices of the ball that do not belong to any of the lattices. Finally, we exclude each automorphism of~$B_r$ that does not allow extension to~an~automorphism of~$\Gamma$ simply by increasing $r$ since the~set of~permutations of~$\A$, and therefore, of~possible automorphisms, is finite.

\subsection{Properties of ball embeddings in $\Gamma$}

\begin{prp} \label{rhopres}

For~each graph automorphism~$\phi: B_r \to B_r + \phi(0)$ and each vertex~$x \in V(B_r)$ the~equalities~$\rho(\phi(x) - \phi(0)) = \rho(x)$ and~$\omega(\phi(x) - \phi(0)) = \omega(x)$~hold.

\end{prp}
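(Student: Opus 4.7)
The plan is to reduce the statement to a comparison of the graph metrics of $B_r$ and $\Gamma$, using the fact that $\phi$ is a genuine graph isomorphism between $B_r$ and its translate $B_r + \phi(0)$. Write $y = \phi(0)$ for brevity.

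The first step is the key observation: whenever $\rho(x) \leq r$, every shortest $0$-to-$x$ path in $\Gamma$ lies entirely inside $V(B_r)$. To see this, take any such path $0 = v_0, v_1, \ldots, v_s = x$ of length $s = \rho(x)$. The prefix up to $v_i$ has length $i$, so $\rho(v_i) \leq i$; conversely, if $\rho(v_i) < i$ for some $i$, the path could be shortened, contradicting $s = \rho(x)$. Hence $\rho(v_i) = i \leq r$ and $v_i \in V(B_r)$. This simultaneously shows that the distance between $0$ and $x$ computed inside the subgraph $B_r$ equals $\rho(x)$ and that the number of shortest $0$-to-$x$ paths computed inside $B_r$ equals $\omega(x)$.

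The second step applies the same observation to $B_r + y$, which is the image of $B_r$ under the $\Gamma$-automorphism $z \mapsto z + y$. By translation invariance, the distance and the number of shortest paths between $y$ and $\phi(x)$ inside the subgraph $B_r + y$ equal $\rho(\phi(x) - y)$ and $\omega(\phi(x) - y)$, respectively.

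The third step is to use that $\phi : B_r \to B_r + y$ is a graph isomorphism sending $0$ to $y$, so it preserves distances and the number of shortest paths between corresponding vertices inside these subgraphs. Combining this with the two previous steps gives
\[
\rho(x) = d_{B_r}(0, x) = d_{B_r + y}(y, \phi(x)) = \rho(\phi(x) - \phi(0)),
\]
and the analogous chain of equalities for $\omega$. No serious obstacle is expected: the only point that genuinely needs to be spelled out is that \emph{all} shortest paths, and not merely \emph{some} shortest path, stay within $B_r$, which is precisely what the sub-path argument in the first step provides and what makes the $\omega$-equality work in addition to the $\rho$-equality.
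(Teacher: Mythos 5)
Your proof is correct and follows essentially the same route as the paper: both rest on the observation that every shortest path in $\Gamma$ from a ball's center to any of its vertices lies entirely inside the ball, so that distances and shortest-path counts computed in the subgraph agree with those in $\Gamma$, and the graph isomorphism then transports both. The paper states this observation in one line and leaves the sub-path verification implicit, whereas you spell it out; the substance is the same.
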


\begin{proof}

For~each ball~subgraph $B_r + z$ any shortest path in~$\Gamma$ between the~center~$z$ and any vertex of~$B_r + z$ contains only edges of~$B_r + z$, hence there is a~bijection between shortest paths in~$\Gamma$ from~$0$ to~$x$ and from~$\phi(0)$ to~$\phi(x)$.

\end{proof}
\begin{definition}

We will say that~$x \in \A$ is a~\emph{primary} element if for~any integer~$t \geq 0$ we have~$\rho(tx) = t$ and~$\omega(tx) = 1$. Any other element of~$\A$ will be a~\emph{secondary} element.

\end{definition}

Let~$\A'$ denote the~set of~all primary elements of~$\A$. It is clear from~the~symmetry argument that~$\A' = -\A'$.

By definition, there exists an~integer~$D$ such that for~all secondary~$x \in \A$ we have either~$\rho(Dx) \neq D$ or~$\omega(Dx) \neq 1$. Aside from~characterization of~primary elements of~$\A$, the~next proposition contains a~constructive way of~choosing a~suitable value of~$D$ in~the~second part of~the~proof.

If~$V \subset \R^n$, then we will write~$\Conv V$ for~the~convex hull of~the~set~$V$.

\begin{prp} \label{mainconv}

An~element~$x \in \A$ is primary if and only if~$x$ is a~vertex of~$\Conv \A$.

\end{prp}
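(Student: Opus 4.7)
The plan is to prove both directions via a linear-functional separation argument, exploiting the fact that $x$ is a vertex of $\Conv \A$ iff some linear functional attains its strict maximum on $\A$ at $x$.

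For the ``vertex $\Rightarrow$ primary'' direction, I would pick a linear functional $f: \R^k \to \R$ with $f(x) > f(y)$ for all $y \in \A \setminus \{x\}$; such $f$ exists because $x$ is extreme in $\Conv \A$. A short observation is that $f(x) > 0$: since $-x \in \A$ (because $\A = -\A$), we get $f(x) > f(-x) = -f(x)$. Now any path of length $s$ from $0$ to $tx$ in $\Gamma$ corresponds to a sequence $a_1, \ldots, a_s \in \A$ with $\sum a_i = tx$; applying $f$ gives $s f(x) \geq \sum f(a_i) = t f(x)$, and so $s \geq t$. Combined with the trivial path $(x, x, \ldots, x)$ of length $t$ this yields $\rho(tx) = t$. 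Moreover, equality $s = t$ forces $f(a_i) = f(x)$ for every $i$, hence $a_i = x$ for all $i$, which gives $\omega(tx) = 1$.

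For the converse I would argue the contrapositive. If $x$ is not a vertex of $\Conv \A$, then $x$ lies in $\Conv(\A \setminus \{x\})$, i.e.\ $x = \sum_i \lambda_i y_i$ for some convex combination with $y_i \in \A \setminus \{x\}$, $\lambda_i > 0$, $\sum \lambda_i = 1$. Since $\A \subset \Z^k$, the polyhedron of feasible coefficient vectors is defined by a rational linear system and therefore has rational vertices, so we may assume $\lambda_i = n_i / M$ with $n_i \in \Z_{\geq 0}$ and $\sum n_i = M \geq 1$. Clearing denominators gives $Mx = \sum n_i y_i$, which exhibits a walk of length $M$ from $0$ to $Mx$ using only elements of $\A \setminus \{x\}$, hence distinct from the trivial walk $(x, \ldots, x)$. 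Therefore either $\rho(Mx) < M$, or $\rho(Mx) = M$ but $\omega(Mx) \geq 2$; in either case $x$ fails the primary condition.

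The only mildly technical point is the rationality step in the second direction; it is a standard fact that a non-empty rational polyhedron has a rational vertex, so I do not expect any real obstacle. Everything else is a one-line double count against a supporting functional together with the existence of the trivial path made of copies of $x$.
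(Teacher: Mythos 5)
Your proof is correct and follows essentially the same route as the paper: the forward direction is the identical supporting-functional double count, and the converse clears denominators in a rational convex combination to manufacture a competing walk of integer length. The only cosmetic difference is in how rationality is handled and which hull you use: the paper observes that $0 \in \Conv \A$ (since $\A = -\A$) and writes $x$ as a $\Q$-combination of $0$ together with elements of $\A \setminus \{x\}$, yielding $Kx = \sum A_i y_i$ with $0 < \sum A_i \leq K$ and hence a walk of length \emph{at most} $K$; you instead use the equivalent fact $x \in \Conv(\A \setminus \{x\})$ and the rational-vertex property of rational polytopes, producing a walk of length \emph{exactly} $M$ that differs from the trivial one in its first step. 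Both close the argument the same way (either $\rho < M$ or $\omega \geq 2$), so there is nothing substantive to flag.
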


\begin{proof}

Let~$x$ be a~vertex of~$\Conv \A$, then there is a~linear function~$l: \R^k \to \R$ such that~$l(x) > l(y)$ for~all~$y \in \A$ that are different from~$x$. It follows that~$l(x) > l(-x) = -l(x)$, and~$l(x) > 0$. Suppose that~$d_1, \ldots, d_s \in \A$ is a~sequence of~edge transitions in $\Gamma$ leading from~0 to~$tx$, that is,~$d_1 + \ldots + d_s = tx$. Then we have~\[tl(x) = l(tx) = l(d_1) + \ldots + l(d_s) \leq sl(x),\] hence~$s \geq t$ and~$\rho(tx) \geq t$. If~$s = t$ and not~all elements~$d_1$,~\ldots,~$d_t$ are equal to~$x$, then~$l(d_1) + \ldots + l(d_t) < tl(x)$, a~contradiction. Thus the~shortest path in $\Gamma$ between~0 to~$tx$ has length~$t$
and is unique, hence $x$ is indeed a~primary element.

Now suppose that~$x$ is not a~vertex of~$\Conv \A$. Note that the~origin of~$\R^k$ lies inside of~$\Conv \A$ since~$\A = -\A$. Then~$x$ belongs to~the~convex hull of~the~origin and a~certain set~$y_1, \ldots, y_s \in \A$ (with all~${y_i \neq x}$) over the field~$\Q$. Hence we have~$x = \sum \alpha_i y_i$ for some non-negative rational numbers~$\alpha_i$ such that $\sum \alpha_i \leq 1$. After~multiplying by~a~common denominator of~the~numbers~$\alpha_i$ we obtain the~equality~$Kx = \sum A_i y_i$, where~$K$ and all~$A_i$ are non-negative integers, and~$0 < \sum A_i \leq K$. From~this equality we obtain that there is a~path in~$\Gamma$ between~0 and~$Kx$ of~length at~most~$K$ that contains transitions other than~$x$, thus~$\phi(Kx) < K$ or~$\omega(Kx) \neq 1$, and~$x$ is not a~primary element.

\end{proof}

From~the~second part of~the~proof we can also obtain the~following

\begin{prp} \label{mainshortcut}

For~each element~$x \in \A$ there exists an~integer~$K_x$ such that~$K_x x$ can be represented as a~sum of~at~most~$K_x$ primary elements (with repetitions allowed).

\end{prp}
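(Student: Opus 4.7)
The plan is to reuse the construction from the second half of the proof of Proposition~\ref{mainconv}, but with a crucial refinement: I want to express $K_x x$ as a non-negative integer combination of \emph{primary} elements only, rather than of arbitrary elements of~$\A$.

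If $x$ is itself primary I would simply take $K_x = 1$, so assume $x$ is secondary. The key geometric observation is that $\Conv \A = \Conv \A'$: by Proposition~\ref{mainconv} the primary elements are exactly the vertices of~$\Conv \A$, and every point of a polytope lies in the convex hull of its vertices. Hence $x \in \Conv \A'$, so there exist reals $\alpha_i \geq 0$ with $\sum_i \alpha_i = 1$ and primary $y_i \in \A'$ satisfying $x = \sum_i \alpha_i y_i$.

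Next I would upgrade the $\alpha_i$ to rationals. Since $x$ and all $y_i$ lie in~$\Z^k$, the set $\{(\alpha_i) \in \R^{|\A'|}_{\geq 0} : \sum_i \alpha_i y_i = x,\ \sum_i \alpha_i = 1\}$ is a non-empty polytope defined by rational data, so any of its vertices has rational coordinates. Taking $K_x$ to be a positive common denominator and $A_i = K_x \alpha_i$ then yields $K_x x = \sum_i A_i y_i$ with non-negative integers $A_i$ summing to exactly~$K_x$, exhibiting $K_x x$ as a sum of $K_x$ primary elements with repetitions, as required.

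The only technical point worth checking is the rationality of some convex combination, which is a standard linear programming fact and should not be a serious obstacle; the real content of the argument is packed into the geometric identity $\Conv \A = \Conv \A'$, which is immediate from Proposition~\ref{mainconv}.
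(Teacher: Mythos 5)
Your proof is correct and takes essentially the same route as the paper: the paper derives Proposition~\ref{mainshortcut} by rerunning the second half of the proof of Proposition~\ref{mainconv} (rational convex combination, clear denominators), and your write-up simply makes explicit the two points the paper leaves implicit, namely that $\Conv\A = \Conv\A'$ so the $y_i$ can be taken primary, and that the coefficients can be taken rational via the standard rational-vertex argument.
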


The~next proposition can be informally stated as~follows: in~any embedding of~a~large enough ball in~$\Gamma$ the~``rays'' (i.\,e., the~unique shortest paths) that correspond to~the~primary directions are <<rigid>> and can only be permuted among~themselves, while opposite rays stay opposite and form a~``straight line''.

\begin{prp} \label{ballmain}

Suppose that~$r \geq \max(2, D)$ and~$\phi: B_r \to B_r + \phi(0)$ is a~graph isomorphism. Then for each~$x \in \A'$ there exists~$x'\in \A'$ such that~$\phi(rx) - \phi(0) = rx'$. Moreover,~$\phi(tx) - \phi(0) = tx'$ for~all integer~$t \in [-r, r]$.

\end{prp}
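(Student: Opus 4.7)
The plan is to extract the rigidity of $\phi$ along primary rays from Proposition~\ref{rhopres}, and then link the two ``opposite'' primary rays via a swap argument. After replacing $\phi$ by $y \mapsto \phi(y) - \phi(0)$, I may assume $\phi(0) = 0$, so that $\phi$ is an automorphism of $B_r$. Fix a primary $x \in \A'$. The ray $P_+ = (0, x, 2x, \ldots, rx)$ is the unique shortest $\Gamma$-path from $0$ to $rx$ by the definition of a primary element. Proposition~\ref{rhopres} gives $\rho(\phi(rx)) = r$ and $\omega(\phi(rx)) = 1$, so $\phi(P_+)$ is the unique shortest $\Gamma$-path from $0$ to $\phi(rx)$. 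Writing its transitions as $d_t = \phi(tx) - \phi((t-1)x) \in \A$, any inequality $d_i \ne d_j$ would let me swap them to produce a second length-$r$ walk, hence a second shortest path to $\phi(rx)$, contradicting $\omega = 1$. Hence all $d_t$ equal some common $x' \in \A$ and $\phi(tx) = tx'$ for $t = 0, \ldots, r$. Primarity of $x'$ follows from $r \ge D$: the relations $\phi(Dx) = Dx'$ and Proposition~\ref{rhopres} force $\rho(Dx') = D$ and $\omega(Dx') = 1$, so $x'$ escapes the failure condition in the definition of $D$. Running the same argument on the primary element $-x$ yields some $x'' \in \A'$ with $\phi(-tx) = tx''$ for $t = 0, \ldots, r$.

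The crux is to verify $x'' = -x'$, which will merge the two half-statements into $\phi(tx) = tx'$ for all $t \in [-r, r]$. I would use the full axial path $P = (-rx, \ldots, -x, 0, x, \ldots, rx)$ in $B_r$. Since $x$ is primary, $\rho(2rx) = 2r$ and $\omega(2rx) = 1$, so $P$ is the unique shortest $\Gamma$-path between its endpoints; because $P$ already lies in $B_r$, it is also the unique shortest path in the graph $B_r$ between $-rx$ and $rx$, and those two vertices are at $B_r$-distance $2r$. Since $\phi$ is an automorphism of $B_r$, it preserves both $B_r$-distances and uniqueness of $B_r$-geodesics, so $\phi(P) = (rx'', \ldots, x'', 0, x', \ldots, rx')$ is the unique shortest path in $B_r$ from $rx''$ to $rx'$.

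To compare $x''$ with $-x'$ I would perturb $\phi(P)$ by swapping the two transitions flanking the midpoint, obtaining the walk
\[
\tilde Q = (rx'', (r-1)x'', \ldots, x'',\ x'' + x',\ x', 2x', \ldots, rx').
\]
Each transition of $\tilde Q$ still lies in $\A$ (every one is $\pm x'$ or $\pm x''$), and the hypothesis $r \ge 2$ guarantees that the new midpoint $x'' + x'$ has $\rho \le 2 \le r$, so $\tilde Q$ is a length-$2r$ walk in $B_r$ from $rx''$ to $rx'$. Assume $x'' + x' \ne 0$; then $\tilde Q$ differs from $\phi(P)$ at the middle vertex, and a dichotomy finishes the argument. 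If $\tilde Q$ is a simple path, it is a second shortest $B_r$-path from $rx''$ to $rx'$, contradicting the established uniqueness; if $\tilde Q$ has a repeated vertex, contracting the cycle produces a walk of length strictly less than $2r$ in $B_r$ between the same endpoints, contradicting the $B_r$-distance equality. Either horn yields a contradiction, so $x'' = -x'$. The main obstacle is precisely this opposite-rays step; the need to keep the perturbed midpoint inside $B_r$ is exactly why the hypothesis $r \ge 2$ appears in the statement.
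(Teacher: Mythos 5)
Your proof is correct and takes essentially the same approach as the paper's: the swap-two-transitions argument combined with $\omega$-preservation forces $\phi$ to send the positive primary ray to a primary ray $t x'$, and a midpoint perturbation of the full diameter path $(-rx, \ldots, rx)$, kept inside $B_r$ by the hypothesis $r \geq 2$, forces the opposite ray to land at $-t x'$. The minor packaging differences (you extract $\phi(tx) = tx'$ for $0 \leq t \leq r$ directly from the swap argument rather than as a final step, and you spell out the walk-versus-simple-path dichotomy that the paper leaves implicit) do not change the substance of the argument.
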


\begin{proof}

Proposition \ref{rhopres} implies that\[\rho(\phi(rx) - \phi(0)) = \rho(rx) = r\]and\[\omega(\phi(rx) - \phi(0)) = \omega(rx) = 1.\] However, if $\phi(rx) - \phi(0) \neq rx'$ for all $x' \in \A$, then we must have $\omega(\phi(rx) - \phi(0)) \neq 1$. Indeed, an arbitrarily chosen shortest path from $\phi(0)$ to $\phi(rx)$ must contain different transitions, thus by permuting them we obtain a different shortest path. Therefore we have~$\phi(rx) - \phi(0) = rx'$ for~a~certain~$x' \in \A$. Moreover, $r \geq D$, thus $x'$ is a primary element by the choice of $D$.

Next, let us establish the~second part of~the~statement for~$t = -r$, that is, \[\phi(-rx) - \phi(0) = -rx'.\] Suppose that~$\phi(-rx) - \phi(0) = rx''$ with~$x'' \neq -x'$. Note that there is a~unique path of~length~$2r$ between~$rx$ and~$-rx$ in~the~graph~$B_r$ since~$x$ is a~primary element. At~the~same time, consider the~following path of~length~$2r$ from~$\phi(rx) = \phi(0) + rx'$ to~$\phi(-rx) = \phi(0) + rx''$ that passes through~$\phi(0)$: $r$ transitions by~$-x'$ followed by~$r$ transitions by~$x''$. Let us exchange~$r$-th and~$(r + 1)$-th step of~this path, the~resulting path will pass through~$\phi(0) + x'' - x'$ instead of~$\phi(0)$. Since~$r \geq 2$ and~$\rho(x'' - x') \leq 2$, the new path lies completely inside of~$B_r + \phi(0)$, and thus we have at~least two different paths of~length~$2r$ between $\phi(rx)$ and $\phi(-rx)$ in $B_r + \phi(0)$. It follows that the~graph isomorphism~$\phi$ does not preserve the~number of~paths of~length~$2r$ between a~pair of~vertices, which is a contradiction. Hence we have $\phi(-rx) - \phi(0) = -rx'$.

Finally, let us prove the~second part for~all other values of~$t$. First, let~$t \geq 0$. The~vertex~$\phi(tx)$ must belong to~the~only shortest path between~$\phi(0)$ and~$\phi(rx)$. But this path consists only of~transitions by~$x'$, hence~$\phi(tx) - \phi(0) = tx'$. The $t \leq 0$ case is handled similarly by considering the shortest path between $\phi(0)$ and $\phi(-rx)$.

\end{proof}

Let us define the~orientation of~a~ball embedding as a~way of~permuting the~primary elements.

\begin{definition}\label{defchi}

Suppose that~$r \geq \max(2, D)$ and~$\phi: B_r + z \to B_r + \phi(z)$ is a~graph isomorphism. Let~$\chi^{\phi}$ denote the~\emph{orientation} of~the~isomorphism~$\phi$ as~a~function that maps each~$x \in \A'$ to~$\phi(x + z) - \phi(z)$. If~$\phi: X \to Y$ is an~isomorphism between two subgraphs of~$\Gamma$, and~$B_r + z$ is a~subgraph of~$X$, let us write~$\chi^{\phi}_{B_r + z}$ for~the~orientation of~the~restriction of~$\phi$ to~$B_r + z$.

\end{definition}

Proposition \ref{ballmain} immediately implies

\begin{corollary}

$\chi^{\phi}$ is a permutation of $\A'$ that satisfies $\chi^{\phi}(x) = -\chi^{\phi}(-x)$.

\end{corollary}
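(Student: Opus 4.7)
The corollary is essentially a direct unpacking of Proposition \ref{ballmain}, and my plan is to assemble the three required facts (well-definedness on $\A'$, bijectivity, central symmetry) in that order from the conclusions already available.

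First, I would reduce to the case $z = 0$. The definition $\chi^\phi(x) = \phi(x + z) - \phi(z)$ is translation-invariant in an obvious sense: composing $\phi$ with the translation $y \mapsto y - z$ on the source and $y \mapsto y - \phi(z)$ on the target produces an isomorphism $B_r \to B_r + (\phi(z) - \phi(z)) = B_r$ whose orientation coincides with $\chi^\phi$. So from here on I may assume $\phi: B_r \to B_r + \phi(0)$ is the given isomorphism and $\chi^\phi(x) = \phi(x) - \phi(0)$.

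Next, Proposition \ref{ballmain} applied at $t = 1$ shows directly that for each $x \in \A'$ there is some $x' \in \A'$ with $\phi(x) - \phi(0) = x'$; thus $\chi^\phi$ is a well-defined map $\A' \to \A'$. To obtain the symmetry, I apply the same proposition at $t = -1$, giving $\phi(-x) - \phi(0) = -x'$, which reads as $\chi^\phi(-x) = -x' = -\chi^\phi(x)$. (Note $-x \in \A'$ since $\A' = -\A'$ by the symmetry argument already mentioned after the definition of primary elements.)

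Finally, for bijectivity it suffices to establish injectivity, since $\A'$ is finite. Suppose $\chi^\phi(x_1) = \chi^\phi(x_2) = x'$ for $x_1, x_2 \in \A'$. Then $\phi(x_1) = \phi(0) + x' = \phi(x_2)$, and since $\phi$ is a graph isomorphism (hence an injection on vertices) we conclude $x_1 = x_2$. There is no real obstacle here — the work has already been done in Proposition \ref{ballmain}; the only point worth mentioning is the translation reduction that lets us read $\chi^\phi$ off the $z = 0$ case.
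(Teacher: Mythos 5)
Your proof is correct and follows the same route the paper has in mind: the paper states that the corollary follows immediately from Proposition~\ref{ballmain}, and your argument is exactly the spelling-out of that implication (the $t = \pm 1$ cases give well-definedness and the symmetry, injectivity of $\phi$ gives bijectivity on the finite set $\A'$, and the translation reduction justifies working at $z = 0$). Nothing is missing.
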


\begin{definition}
For~$x \in G$ define the~\emph{norm~$||x||$} as an~$l_{\infty}$-norm of~the~corresponding element of~$\Z^k$ (i.\,e., the~largest absolute value of~coordinates). For~a~subset~$V \subset G$ put~$||V|| = \max_{x \in V} ||x||$.
\end{definition}

The~next proposition states that if two large enough balls (possibly having common vertices) are subgraphs of~an~induced subgraph of~$\Gamma$, and the~centers of~the~balls are adjacent, then their orientations must coincide in~any embedding of~the subgraph.

\begin{prp} \label{adjballs}

Suppose that~$r \geq \max(2, D, 2||\A|| + 1)$. Suppose further that~${z \in \A}$, that~$B_r$ and~$B'_r = B_r + z$ are balls of~radius~$r$ with~centers at~vertices~$0$ and~$z$ respectively, and that~$C$ is~the~subgraph of~$\Gamma$ induced by~vertices of~$B_r$ and~$B'_r$. Finally, suppose that~$\lambda$ is a~subgraph of~$\Gamma$, and~$\phi: C \to \lambda$ is a~graph isomorphism of~$C$ and $\lambda$. Then the~restrictions of~$\phi$ to~$B_r$ and~$B'_r$ have equal orientations.

\end{prp}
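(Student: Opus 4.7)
My plan is to compare the two orientations $\chi := \chi^{\phi}_{B_r}$ and $\chi' := \chi^{\phi}_{B'_r}$ on each primary direction $x \in \A'$ by running a ``ladder'' argument between the two parallel primary rays: the one through $0$ inside $B_r$ and its translate through $z$ inside $B'_r$. The intuition is that $\phi$ acts on each of these rays as a uniform translation by Proposition~\ref{ballmain}, so the two shifts can only differ by a bounded ``drift'' that will be forced to vanish once the ray is long enough.

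Concretely, for $x \in \A'$ with $x \neq \pm z$, I would introduce the auxiliary function $\alpha(u) := \phi(u + z) - \phi(u)$ for $u \in V(B_r)$. Since $u + z \in V(B'_r)$ whenever $u \in V(B_r)$, the edge $\{u, u + z\}$ lies in $C$, so its $\phi$-image is an edge of $\Gamma$, and hence $\alpha(u) \in \A$. For every $1 \leq t \leq r$ the four vertices $(t-1)x$, $tx$, $tx + z$, $(t-1)x + z$ are pairwise distinct (this is exactly where $x \neq \pm z$ enters) and all lie in $V(C)$, so they span a $4$-cycle in $C$. Summing the four edge-vectors of its $\phi$-image around the cycle, and identifying the bottom ray-step as $\chi(x)$ by Proposition~\ref{ballmain} for $B_r$ and the top ray-step as $\chi'(x)$ by the same proposition applied to $B'_r$, I obtain the recursion
\[
\alpha(tx) - \alpha((t-1)x) = \chi'(x) - \chi(x), \qquad 1 \leq t \leq r,
\]
and telescoping gives $\alpha(rx) - \alpha(0) = r\bigl(\chi'(x) - \chi(x)\bigr)$. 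The left-hand side has $||\cdot||$-norm at most $2||\A||$ as a difference of two elements of $\A$, while $r \geq 2||\A|| + 1 > 2||\A||$ by hypothesis; since $\chi'(x) - \chi(x) \in \Z^k$, this forces $\chi'(x) = \chi(x)$.

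The only cases not covered by the ladder, namely $x = \pm z$ when $z$ is itself primary (which would make the 4-cycle degenerate), I would handle directly from Proposition~\ref{ballmain} applied to $B_r$ alone: with $t = 1, 2$ (legitimate since $r \geq 2$) it gives $\phi(tz) = \phi(0) + t\chi(z)$, so $\chi'(z) = \phi(2z) - \phi(z) = \chi(z)$, and the case $x = -z$ is symmetric. The main piece of bookkeeping is verifying that the 4-cycle is nondegenerate and remains inside $C$ for every $t \in \{1, \ldots, r\}$; once that is in hand, the substantive content is just that a linearly growing $\Z^k$-valued sequence cannot remain in the bounded set $\A - \A$, and this is what converts the bound $r \geq 2||\A|| + 1$ into the desired equality of orientations.
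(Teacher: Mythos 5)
Your argument is correct and rests on exactly the same ingredients as the paper's proof: Proposition~\ref{ballmain} applied to both $B_r$ and $B'_r$ to pin down $\phi$ along the parallel primary rays, plus the single fact that $\alpha(rx)=\phi(rx+z)-\phi(rx)$ is the image of an edge of $C$ and hence lies in $\A$. The telescoping ladder and the separate treatment of $x=\pm z$ are unnecessary overhead: the paper simply subtracts the two ray formulas to obtain $\phi(rx+z)-\phi(rx)=r(x''-x')+(\phi(z)-\phi(0))$ directly, and this works uniformly (for $x=z$ the relevant edge of $C$ is $\{rz,(r+1)z\}$), so the triangle inequality gives the contradiction at once.
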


\begin{proof}

Let us assume that $\chi^{\phi}_{B_r}(x) \neq \chi^{\phi}_{B'_r}(x)$ for some $x \in \A'$, that~is, \[\phi(x) = \phi(0) + x',~\phi(x + z) = \phi(z) + x'',~x' \neq x''.\] Proposition~\ref{ballmain} implies that\[\phi(rx) = \phi(0) + rx',~\phi(rx + z) = \phi(z) + rx''.\]Since~$\{rx, rx + z\}$ is an~edge of~$C$, we have the inequality \[||\phi(rx + z) - \phi(rx)|| \leq ||\A||.\] But \[||\phi(rx + z) - \phi(rx)|| = ||r(x'' - x') + \phi(z) - \phi(0)|| \geq \]\[ r||x'' - x'|| -||\phi(z) - \phi(0)|| \geq r - ||\A|| > ||\A||,\] which is a contradiction.

\end{proof}

\begin{corollary} \label{adjballs2}

In~the~assumptions of~Proposition~\ref{adjballs}, if we additionally have~${z \in \A'}$, then~$\phi(tz) - \phi(0) = t(\phi(z) - \phi(0))$ for~all integer~$t \in [-r, r + 1]$.

\end{corollary}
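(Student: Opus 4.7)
The plan is to split the range $t \in [-r, r+1]$ into two regimes, handled by applying Proposition~\ref{ballmain} to two different balls, and then using Proposition~\ref{adjballs} to reconcile the resulting orientations.

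First, set $z' := \phi(z) - \phi(0)$, so $z' = \chi^{\phi}_{B_r}(z)$ by Definition~\ref{defchi}. Since $z \in \A'$ is primary, Proposition~\ref{ballmain} applied directly to the ball $B_r$ (centered at $0$) yields $\phi(tz) - \phi(0) = tz'$ for every integer $t \in [-r, r]$. This covers all but one of the required values.

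It remains to handle $t = r+1$. Here I would shift attention to the ball $B'_r = B_r + z$, whose center is $z$. The restriction $\phi|_{B'_r}$ is a graph isomorphism from $B'_r$ onto some subgraph of $\Gamma$, and by Proposition~\ref{balliso} this image is exactly $B_r + \phi(z)$; in particular $\phi|_{B'_r}$ satisfies the hypotheses of Proposition~\ref{ballmain}. Applying that proposition to the primary element $z$ centered at $z$ gives $\phi(z + rz) - \phi(z) = r \chi^{\phi}_{B'_r}(z)$. The key observation is that Proposition~\ref{adjballs}, whose hypotheses are assumed, forces $\chi^{\phi}_{B'_r}(z) = \chi^{\phi}_{B_r}(z) = z'$. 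Combining, $\phi((r+1)z) = \phi(z) + rz' = \phi(0) + (r+1)z'$, as desired.

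No step is truly hard: Proposition~\ref{ballmain} does the geometric work, and the only subtlety is that $t = r+1$ falls outside the range in which $B_r$ alone controls $\phi$, which is precisely why Proposition~\ref{adjballs} is invoked to transport the orientation from $B_r$ to the neighboring ball $B'_r$. The two pieces agree at the overlap $t \in [0, r]$, so the formula extends consistently to all of $[-r, r+1]$.
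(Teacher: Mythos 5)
Your proof is correct, and the overall structure matches the paper's: the case $t \in [-r, r]$ follows from Proposition~\ref{ballmain} applied to $B_r$, and the remaining case $t = r+1$ is handled by applying Proposition~\ref{ballmain} to $B'_r$ and then pinning down $\chi^{\phi}_{B'_r}(z)$. Where you diverge is in that last identification. You invoke Proposition~\ref{adjballs} to transport the orientation from $B_r$ to $B'_r$, i.e.\ $\chi^{\phi}_{B'_r}(z) = \chi^{\phi}_{B_r}(z) = \phi(z) - \phi(0)$. The paper instead stays entirely inside $B'_r$: it uses the antipodal identity $\chi^{\phi}_{B'_r}(z) = -\chi^{\phi}_{B'_r}(-z)$ from the corollary following Proposition~\ref{ballmain}, together with the observation that $\chi^{\phi}_{B'_r}(-z)$ unwinds to $\phi((-z)+z) - \phi(z) = \phi(0) - \phi(z)$ directly from Definition~\ref{defchi}. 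Both are valid here --- the corollary explicitly assumes the hypotheses of Proposition~\ref{adjballs}, so you are entitled to use its conclusion --- but the paper's route is a bit more economical, since it needs only the local antipodal symmetry of the single ball $B'_r$ rather than the heavier orientation-agreement result.
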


Informally this corollary can be restated as follows: if one of~the~balls has its center on~the~``line'' that corresponds to a~primary direction of~another ball, then in~any embedding the~``lines'' that correspond to~this~direction in~both balls must be aligned.

\begin{proof}
Only the~case~$t = r + 1$ is uncovered by~the~Proposition~\ref{ballmain}. Applying this proposition to~$B'_r$ we have~$\phi((r + 1)z) = \phi(z) + r\chi^{\phi}_{B'_r}(z)$. But \[\chi^{\phi}_{B'_r}(z) = -\chi^{\phi}_{B'_r}(-z) = -(\phi(0) - \phi(z)) = \phi(z) - \phi(0).\] After~substitution and transfer of~$-\phi(0)$ to~the~left-hand side we obtain the~desired equality.

\end{proof}

\begin{corollary} \label{connballs}

Suppose that~$r \geq \max(2, D, 2||\A|| + 1)$. Further, let~$M \subset G$ be a~connected subset of~vertices of~$\Gamma$, and~$C$ is the~subgraph of~$\Gamma$ induced by~the~set~\[\displaystyle\bigcup_{x \in M} V(B_r + x).\] Finally, suppose that~$\lambda$ is a~subgraph of~$\Gamma$, and~$\phi: C \to \lambda$ is a graph isomorphism of~$C$ and~$\lambda$. Then restrictions of $\phi$ to $B_r + x$ have equal orientation for all $x \in M$.

\end{corollary}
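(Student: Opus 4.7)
The plan is to reduce the statement to Proposition~\ref{adjballs} via a connectivity argument. Since $M$ is a connected subset of vertices of~$\Gamma$, for any two centers $x, y \in M$ there is a path $x = x_0, x_1, \ldots, x_s = y$ with all $x_i \in M$ and with each consecutive pair forming an edge of~$\Gamma$, which means $x_{i+1} - x_i \in \A$. The goal is to show that the orientations $\chi^{\phi}_{B_r + x_i}$ are all equal along this path.

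I would proceed step by step along the path. For each consecutive pair $(x_i, x_{i+1})$, consider the subgraph $C_i$ of~$\Gamma$ induced by $V(B_r + x_i) \cup V(B_r + x_{i+1})$. Since both $x_i$ and $x_{i+1}$ lie in~$M$, we have $C_i \subseteq C$, so the restriction of~$\phi$ to~$C_i$ is a graph isomorphism onto its image in~$\lambda$. After translating so that the first center sits at the origin (which we may do freely since $\Gamma$ is vertex-transitive and orientation is defined via differences of $\phi$-values), Proposition~\ref{adjballs} applies with $z = x_{i+1} - x_i \in \A$ and yields
\[
\chi^{\phi}_{B_r + x_i} \;=\; \chi^{\phi}_{B_r + x_{i+1}}.
\]
Transitivity along the path then gives $\chi^{\phi}_{B_r + x} = \chi^{\phi}_{B_r + y}$, and since $x, y \in M$ were arbitrary, all orientations coincide.

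The only thing that requires a brief check is that Definition~\ref{defchi} is compatible with restriction to $C_i$: the orientation $\chi^{\phi}_{B_r + x_i}$ is defined purely in terms of the values of $\phi$ on $\{x_i\} \cup (\A' + x_i) \cup V(B_r + x_i)$, which is contained in $V(C_i)$, so restricting $\phi$ to $C_i$ (or indeed to $B_r + x_i$) does not change the orientation. Hence Proposition~\ref{adjballs} may be invoked directly at each step. There is no serious obstacle in this argument; it is purely a ``propagation along a path'' reduction to the two-ball case already handled.
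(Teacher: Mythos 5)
Your proof is correct and is exactly the intended argument: the paper does not write out a proof for this corollary precisely because it follows from Proposition~\ref{adjballs} by the path-propagation reduction you describe, together with the observation (which you correctly make explicit) that the orientation of a ball depends only on the values of~$\phi$ on that ball, so restricting~$\phi$ to the two-ball subgraph~$C_i$ before invoking Proposition~\ref{adjballs} is harmless.
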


We will write~$\chi^{\phi}$ for~the~orientation of~restriction of~$\phi$ to~any ball when Corollary~\ref{connballs} is applicable.

\subsection{Lattices induced by~primary and secondary elements}

Before proceeding, let us point out a~simple fact.

\begin{prp} \label{containment}

Suppose that~$x$ is a~vertex of~$B_r$, and~$d$ is a~non-negative integer that satisfies~$d + \rho(x) \leq r$. Then~$B_d + x$ is a~subgraph of~$B_r$.

\end{prp}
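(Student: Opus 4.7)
The plan is to verify vertex containment using the triangle inequality for the word metric $\rho$, and then observe that edge containment follows automatically because $B_r$ is defined as an induced subgraph.

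First, I would recall that $\rho$ is a metric on $G$: since $\rho$ equals the graph distance in the vertex-transitive graph $\Gamma$, and in particular $\rho(u + v) \leq \rho(u) + \rho(v)$ (concatenating a shortest $0 \to u$ walk with a translate of a shortest $0 \to v$ walk yields a walk of length $\rho(u) + \rho(v)$ from $0$ to $u + v$).

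Next, take any vertex $y + x$ of $B_d + x$, where $y \in V(B_d)$, so $\rho(y) \leq d$. Then
\[
\rho(y + x) \leq \rho(y) + \rho(x) \leq d + \rho(x) \leq r,
\]
so $y + x \in V(B_r)$. Hence $V(B_d + x) \subseteq V(B_r)$.

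Finally, for the edges: any edge of $B_d + x$ is of the form $\{u + x, v + x\}$ where $\{u, v\} \in E(B_d) \subseteq E(\Gamma)$, so $\{u + x, v + x\} \in E(\Gamma)$ by translation invariance. Since both of its endpoints lie in $V(B_r)$ and $B_r$ is the subgraph of $\Gamma$ \emph{induced} by $V(B_r)$, this edge belongs to $E(B_r)$. Therefore $B_d + x$ is a subgraph of $B_r$. No step here is delicate; the entire content is the triangle inequality plus the fact that $B_r$ is an induced subgraph.
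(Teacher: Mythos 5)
Your proof is correct and essentially identical to the paper's: both establish vertex containment via the triangle inequality for the word metric $\rho$ and then deduce edge containment from the fact that $B_r$ is an induced subgraph of $\Gamma$. Your write-up is a bit more explicit about the edge step, but the reasoning is the same.
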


\begin{proof}

Each vertex~$v$ of~the~graph~$B_d + x$ can be represented as~$v = z + x$, where~$z \in B_d$, hence~$\rho(v) \leq \rho(z) + \rho(x) \leq d + (r - d) = r$, which implies~$v \in V(B_r)$. Further, both subgraphs are induced, thus~$E(B_d + x) \subseteq E(B_r)$.

\end{proof}

Now, let us show that each automorphism of~a~large enough ball is additive on~sums of~primary elements of~$\A$.

\begin{lem} \label{mainsum}
Suppose that~$\A' = \{a_1, \ldots, a_s\}$. Denote~$r_0 = \max(2, D, 2||\A|| + 1)$, and put~$R = sr_0$. Suppose that~$r \geq R$ is an~integer,~$\phi$ is an~automorphism of~$B_r$, and~$\alpha_1$,~\ldots,~$\alpha_s$ are non-negative integers which sum does not exceed $r$. Then the~following equality holds:
\[\phi(\alpha_1 a_1 + \ldots + \alpha_s a_s) = \alpha_1 \phi(a_1) + \ldots + \alpha_s \phi(a_s).\]
\end{lem}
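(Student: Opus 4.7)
The plan is to induct on $n = \sum_i \alpha_i$. The base case $n = 0$ reduces to $\phi(0) = 0$, which is immediate from Proposition~\ref{balliso}: an automorphism of $B_r$ sends $B_r$ to $B_r + \phi(0)$, and the equality $B_r = B_r + \phi(0)$ of vertex sets forces $\phi(0) = 0$.

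For the inductive step I would single out a \emph{dominant} primary direction. Choose $i^* \in \{1, \ldots, s\}$ with $\alpha_{i^*} = \max_i \alpha_i$, so $\alpha_{i^*} \ge n/s$ by pigeonhole. Set $y = \sum_i \alpha_i a_i$ and $y^* = y - \alpha_{i^*} a_{i^*}$. The crucial distance estimate is
\[
\rho(y^*) \le \sum_{j \neq i^*} \alpha_j = n - \alpha_{i^*} \le n\bigl(1 - \tfrac{1}{s}\bigr) \le r\bigl(1 - \tfrac{1}{s}\bigr) = r - \tfrac{r}{s} \le r - r_0,
\]
using $n \le r$ and $r \ge R = sr_0$. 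Thus $y^* \in V(B_{r - r_0})$, and the coefficient vector for $y^*$ sums to $n - \alpha_{i^*} \le n - 1$, so the inductive hypothesis yields $\phi(y^*) = \sum_{j \neq i^*} \alpha_j \phi(a_j)$.

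To bridge from $y^*$ to $y$, I would apply Proposition~\ref{ballmain} to the sub-ball $B_{r - \rho(y^*)} + y^*$, which sits inside $B_r$ by Proposition~\ref{containment} and has radius at least $r_0 \ge \max(2, D)$. Since $\rho(y^*) + \alpha_{i^*} \le n \le r$, the entire ray $\{y^* + t a_{i^*} : 0 \le t \le \alpha_{i^*}\}$ fits inside this sub-ball, so Proposition~\ref{ballmain} gives $\phi(y) - \phi(y^*) = \alpha_{i^*}\, \chi^\phi_{B_{r - \rho(y^*)} + y^*}(a_{i^*})$. To identify this local orientation with $\phi(a_{i^*})$, I would invoke Corollary~\ref{connballs} with $M = V(B_{r - r_0})$, a connected vertex set containing both $0$ and $y^*$, for which $\bigcup_{x \in M} V(B_{r_0} + x) = B_r$ is the domain of $\phi$; the corollary then forces the orientation to be the same at every $x \in M$, so in particular $\chi^\phi_{B_{r_0} + y^*}(a_{i^*}) = \chi^\phi_{B_{r_0}}(a_{i^*}) = \phi(a_{i^*})$. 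Since orientation at a center depends only on the values of $\phi$ on its $\A'$-neighbours, the same equality holds for the larger sub-ball around $y^*$. Combining, $\phi(y) = \phi(y^*) + \alpha_{i^*} \phi(a_{i^*}) = \sum_i \alpha_i \phi(a_i)$.

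The only genuine obstacle is the distance bound $\rho(y^*) \le r - r_0$, and this is precisely where the factor of $s$ in $R = sr_0$ plays an essential role. A naive step-by-step walk along a word spelling $y$ would force intermediate vertices out of $B_{r - r_0}$ as soon as $n$ approaches $r$, which would break the orientation transport provided by Corollary~\ref{connballs}; the pigeonhole gain $\alpha_{i^*} \ge n/s$ is exactly what converts the hypothesis $n \le r$ into the room $r - \rho(y^*) \ge r/s \ge r_0$ needed to both fit the large ray and keep orientations consistent.
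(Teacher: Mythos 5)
Your proof is correct and rests on the same geometric ingredients as the paper's (Corollary~\ref{connballs} for orientation transport, Proposition~\ref{ballmain} for the long ray, and Proposition~\ref{containment}), but it is organized differently. The paper singles out the dominant coefficient $\alpha_1$, trims it to $\alpha'_1 = \max(\alpha_1 - r_0, 0)$, and then walks a greedily constructed step-by-step path from $0$ to $u = \alpha'_1 a_1 + \alpha_2 a_2 + \cdots$ inside $B_{r - r_0}$, applying Corollary~\ref{connballs} once per step and finishing with a single ray of length at most $r_0$; the bound $t \le r - r_0$ is obtained by a case split on whether $\alpha_1 \ge r_0$. You instead run a strong induction on $n = \sum \alpha_i$, peel off \emph{all} $\alpha_{i^*}$ copies of the dominant generator at once, and obtain the same critical bound $\rho(y^*) \le r - r_0$ via the pigeonhole estimate $\alpha_{i^*} \ge n/s$ together with $r \ge sr_0$. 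Both routes use the factor $s$ in $R = sr_0$ in exactly one place, and both bridge the last stretch via a sub-ball centered at a point of $B_{r - r_0}$. The net effect is that your version replaces the paper's inner path-walking induction by a single recursive call plus a single long ray, which is a cleaner packaging of the same idea; the paper's version makes the ``one generator step at a time'' mechanism more explicit. One small thing to keep precise if you were to write this up: the passage from $B_r = B_r + \phi(0)$ to $\phi(0) = 0$ in the base case deserves a word (e.g.\ symmetry of $V(B_r)$ under $x \mapsto -x$ and a center-of-mass argument, or simply observing that $0$ is the unique vertex whose ball of radius $r$ around it, inside $\Gamma$, is contained in $V(B_r)$).
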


\begin{proof}

Consider a~linear combination \[v = \alpha_1 a_1 + \ldots + \alpha_s a_s\] that satisfies the~premise of~the~lemma. Without loss of~generality, let us assume that~$\alpha_1$ is the~largest among the~coefficients~$\alpha_i$. Put~$\alpha'_1 = \max(\alpha_1 - r_0, 0)$, and \[t = \alpha'_1 + \alpha_2 + \ldots + \alpha_s.\] Note that~$t \leq r - r_0$. Indeed, if~$\alpha_1$ is at~least~$r_0$, then the~claim is obvious. Otherwise we have $\alpha'_1 = 0$, and~$\alpha_i \leq \alpha_1 < r_0$ for~all~$i$, thus \[\alpha'_1 + \alpha_2 + \ldots + \alpha_s < (s - 1)r_0 = r - r_0.\]

Put~$u = \alpha'_1 a_1 + \alpha_2 a_2 + \ldots + \alpha_s a_s$. Let us construct a~path $u_0$,~\ldots,~$u_t$ in~$B_r$ from~$u_0 = 0$ to~$u_t = u$ as~follows. Start from~the zero sum~$u_0 = 0a_1 + \ldots + 0a_s$. To transfer from~$u_i$ to~$u_{i + 1}$, choose a~primary element~$a_j$ and put~$u_{i + 1} = u_i + a_j$, thus increasing~$j$-th coefficient of~the~sum by~1. At~each step we choose~$a_j$ in~such a~way that no~coefficient of~$u_{i + 1}$ exceeds the~corresponding coefficient of~$u$. The~process stops once~$u_i = u$ (obviously, this will take~$t$ steps).

Suppose that~after~$i$ steps we have~$u_i = \beta_1 a_1 + \ldots + \beta_s a_s$. Let us show by~induction that for~each~$i$ from~0 to~$t$ we have \[\phi(u_i) = \beta_1 \phi(a_1) + \ldots + \beta_s \phi(a_s).\] The~base case~$i = 0$ is trivial. Let us ensure correctness of~the~inductive step from~$i$ to~$i + 1$. Without loss of~generality, we will assume that~$u_{i + 1} - u_i = a_1$, then we have to~prove
\[\phi(u_{i + 1}) = (\beta_1 + 1) \phi(a_1) + \ldots + \beta_s \phi(a_s).\]
Note that~$\rho(u_i) \leq i \leq t \leq r - r_0$, thus by~Proposition~\ref{containment}~$B_{r_0} + u_i$ is a~subgraph of~$B_r$. Further, since~$u_0, \ldots, u_i$ is a~connected set of~vertices of~$\Gamma$, by Corollary~\ref{connballs} the restrictions~$\phi|_{B_{r_0} + u_i}$ and~$\phi|_{B_{r_0}}$ have the~same orientation, hence
\[\phi(u_{i + 1}) - \phi(u_i) = \phi(u_i + a_1) - \phi(u_i) = \phi(0 + a_1) - \phi(0) = \phi(a_1),\]
because~$a_1 \in \A'$. After~adding this to~the~representation of~$\phi(u_i)$ as~a~linear combination of~$\phi(a_1)$,~\ldots,~$\phi(a_s)$, we obtain the~induction claim for~$i + 1$, which proves the~claim for~all~$i$ from~0 to~$t$.

Now we have~$v - u_t = (\alpha_1 - \alpha'_1)a_1$, and~$\alpha_1 - \alpha'_1 \leq r_0$. We also have~$\rho(u_t) \leq t \leq r - r_0$, thus~$B_{r_0} + u_t$ is a subgraph of~$B_r$. Since the~restriction~$\phi|_{B_{r_0} + u_t}$ has the~same orientation as~$\phi|_{B_{r_0}}$, by~Proposition~\ref{ballmain} we obtain \[\phi(v) - \phi(u_t) = (\alpha_1 - \alpha'_1)\phi(a_1).\] After adding this to~$\phi(u_t) = \alpha'_1 \phi(a_1) + \ldots + \alpha_s \phi(a_s)$, we obtain the~claim of~the~lemma.

\end{proof}

Suppose that~$r \geq R$. Let~$A_r$ denote the~set of~vertices of~$B_r$ that are representable as~a~sum of~primary elements in~the~sense of~Lemma~\ref{mainsum}. Since for~any~$\phi \in \Aut(B_r)$ and~each primary~$a \in \A'$ the element~$\phi(a)$ is also primary, Lemma~\ref{mainsum} implies that~$\phi(A_r) = A_r$.

\begin{lem} \label{mainlinear}

There is an integer~$R_L$ such that for~each integer~$r \geq R_L$ and each~$\phi \in \Aut(B_r)$ there is a~linear map~$T: \R^k \to \R^k$ with~$\phi(x) = T(x)$ for~all~$x \in A_r$.

\end{lem}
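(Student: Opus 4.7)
The plan is to construct $T$ by choosing a basis of $\R^k$ inside the primary set $\A'$ and then upgrading the additivity on non-negative combinations supplied by Lemma~\ref{mainsum} to a linear identity on signed combinations.

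First I would check that $\A'$ spans $\R^k$ as a real vector space. Indeed, $\A$ generates $G \sim \Z^k$ and hence spans $\R^k$, and Proposition~\ref{mainshortcut} gives, for every $x \in \A$, an identity $K_x x = p_1 + \ldots + p_{K_x}$ with $p_j \in \A'$, so $x$ is a rational linear combination of primary elements. Pick $k$ linearly independent elements $a_{i_1}, \ldots, a_{i_k} \in \A'$ and define $T \colon \R^k \to \R^k$ as the unique linear map with $T(a_{i_j}) = \phi(a_{i_j})$ for $j = 1, \ldots, k$.

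The central step is to verify $\phi(a) = T(a)$ for every $a \in \A'$. Since $a$ lies in the $\Q$-span of the chosen basis, there exist a positive integer $N_a$ and integers $c_1, \ldots, c_k$ with $N_a\, a = \sum_j c_j a_{i_j}$. Splitting $c_j = c_j^+ - c_j^-$ with $c_j^{\pm} \geq 0$ rearranges the relation into
\[
N_a\, a + \sum_j c_j^- a_{i_j} \;=\; \sum_j c_j^+ a_{i_j},
\]
where both sides are non-negative integer combinations of primary elements. Let $R_L$ be the maximum of $R$ (from Lemma~\ref{mainsum}) and of $N_a + \sum_j (c_j^+ + c_j^-)$ taken over the finitely many $a \in \A'$; this is a constant depending only on $\A$. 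For every $r \geq R_L$ both sides of the rearranged relation meet the hypothesis of Lemma~\ref{mainsum}, so applying it to each side and subtracting yields
\[
N_a \phi(a) \;=\; \sum_j c_j \phi(a_{i_j}) \;=\; \sum_j c_j T(a_{i_j}) \;=\; T(N_a\, a) \;=\; N_a T(a),
\]
hence $\phi(a) = T(a)$.

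With this established, for any $x \in A_r$ I write $x = \sum_i \alpha_i a_i$ with $\alpha_i \geq 0$ and $\sum_i \alpha_i \leq r$; Lemma~\ref{mainsum} then gives $\phi(x) = \sum_i \alpha_i \phi(a_i) = \sum_i \alpha_i T(a_i) = T(x)$. Non-degeneracy of $T$ (needed by the definition of a linear automorphism) follows because $\phi$ permutes $\A'$ by Proposition~\ref{ballmain}: the image $T(\A') = \phi(\A') = \A'$ spans $\R^k$, so $T$ is surjective and therefore invertible. The main technical obstacle is purely arithmetic bookkeeping, namely choosing $R_L$ so that every one of the finitely many defining relations $N_a\, a = \sum_j c_j a_{i_j}$ fits within the coefficient window allowed by Lemma~\ref{mainsum}; the geometric content is already contained in Lemma~\ref{mainsum} and in the fact that $\A'$ spans $\R^k$.
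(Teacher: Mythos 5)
Your proposal is correct and follows essentially the same route as the paper: choose a basis of $\R^k$ inside $\A'$, define $T$ by its values on that basis, clear denominators to get an integer relation for each $a \in \A'$, and apply Lemma~\ref{mainsum} to promote additivity to linearity once $r$ exceeds a constant depending only on $\A$. The only cosmetic difference is that you split each coefficient into positive and negative parts and apply Lemma~\ref{mainsum} to the two sides of the rearranged relation, whereas the paper applies it to the single relation $Ka - \sum\beta_i a_i = 0$ by tacitly using that $-a_i\in\A'$ and $\phi(-a_i)=-\phi(a_i)$; both are valid bookkeeping devices for the same idea.
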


\begin{proof}

Since~$\A'$ has full rank in~$\Q^k$, we can choose a~basis of~$\Q^k$ among elements of~$\A'$; let~$a_1$,~\ldots,~$a_k$ denote elements of~the~basis. Any element~$a \in \A'$ is representable as a~rational linear combination of~$a_1, \ldots, a_k$: \[a = \alpha_1 a_i + \ldots + \alpha_k a_k.\] After multiplying by a common denominator, we obtain: \[Ka = \beta_1 a_1 + \ldots + \beta_k a_k,\] where~$K$ and all~$\beta_i$ are integers. Put~$K'_a = \max(R, K + |\beta_1| + \ldots + |\beta_k|)$. Invoking Lemma~\ref{mainsum} for~a~ball of~radius~$r \geq K'_a$ and the~equality
\[Ka - \beta_1 a_1 - \ldots - \beta_k a_k = 0,\]
we obtain that
\[K \phi(a) - \beta_1 \phi(a_1) - \ldots - \beta_k \phi(a_k) = \phi(0) = 0.\]
Division by~$K$ and a~transfer to~the~right-hand side yields
\[\phi(a) = \alpha_1 \phi(a_i) + \ldots + \alpha_k \phi(a_k).\]

Choose~$R_L$ as the~largest value of~$K'_a$ for~all~$a \in \A'$, then the~linear map~$T$ induced by the values of~$\phi \in \Aut(B_r)$ on~vectors~$a_1$, \ldots, $a_k$ agrees with~$\phi$ on~all elements of~$\A'$ once~$r \geq R_L$, thus by~Lemma~\ref{mainsum} it must agree with~$\phi$ on~all elements of~$A_r$.

\end{proof}

Let~$\overline{\A'} = \A \backslash \A'$ denote the~set of~all secondary elements of~$\A$.

\begin{prp} \label{ballnotmain}

If~$\phi \in \Aut(B_r)$, and~$v \in V(B_r)$ satisfies~$\rho(v) \leq r - r_0$, then for~each~secondary element~$a \in \A$ the element $\phi(v + a) - \phi(v)$ is secondary as well.

\end{prp}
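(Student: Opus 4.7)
The plan is to argue by contradiction. Suppose that for some secondary $a \in \A$ the element $w := \phi(v+a) - \phi(v)$ lies in $\A'$. Membership $w \in \A$ comes for free since $\{v, v+a\}$ is an edge of $B_r$ and $\phi$ is an automorphism; I only need to exclude the primary case.

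First I would localize $\phi$ around $v$. Since $\rho(v) \leq r - r_0$, Proposition~\ref{containment} places $B_{r_0} + v$ inside $B_r$, so the restriction $\phi|_{B_{r_0}+v}$ is a graph isomorphism onto an induced subgraph of $\Gamma$ sitting inside $B_r$. Applying Proposition~\ref{balliso} to the translated map $y \mapsto \phi(y+v)$ identifies this image with $B_{r_0} + \phi(v)$. Shifting back by $-\phi(v)$ on the output side produces the map $\psi: B_{r_0} \to B_{r_0}$ given by $\psi(y) = \phi(v+y) - \phi(v)$; this is an element of $\Aut(B_{r_0})$ satisfying $\psi(0) = 0$ and $\psi(a) = w$.

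The key step is to invoke Proposition~\ref{ballmain} on the \emph{inverse} automorphism $\psi^{-1}: B_{r_0} \to B_{r_0}$, which also fixes the origin and is eligible because $r_0 \geq \max(2, D)$. Fed the primary element $w \in \A'$, the proposition yields some $w'' \in \A'$ with $\psi^{-1}(r_0 w) = r_0 w''$; the ``moreover'' clause (at $t=1$) upgrades this to $\psi^{-1}(w) = w''$, i.e.\ $\psi(w'') = w$. Combined with $\psi(a) = w$ and the injectivity of $\psi$, this forces $a = w''$, contradicting the fact that $a$ is secondary while $w''$ is primary.

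The main conceptual hurdle is recognizing that Proposition~\ref{ballmain} should be applied to $\psi^{-1}$ rather than to $\psi$ itself. Pushing $a$ forward through $\psi$ is hopeless because the proposition only controls shortest-path rays emanating in primary directions; however, asking which preimage the primary element $w$ must have turns the rigidity around and extracts exactly the uniqueness needed for the contradiction. By comparison the bookkeeping in the localization step --- confirming that $\psi$ is well-defined, fixes $0$, and is surjective onto $B_{r_0}$ --- is routine once Propositions~\ref{containment} and~\ref{balliso} are in hand.
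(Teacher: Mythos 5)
Your proof is correct and follows essentially the same skeleton as the paper's: localize $\phi$ near $v$ via Proposition~\ref{containment} and Proposition~\ref{balliso} to obtain $\psi \in \Aut(B_{r_0})$ with $\psi(0)=0$, then invoke Proposition~\ref{ballmain}. The one place you diverge is the final step. You apply Proposition~\ref{ballmain} to $\psi^{-1}$ and argue by contradiction, whereas the paper applies it to $\psi$ directly: Proposition~\ref{ballmain} shows $\psi(\A') \subseteq \A'$; since $\psi$ restricted to $\A$ is a bijection of the finite set $\A$ (edges from $0$ go to edges from $0$), it permutes $\A'$, and therefore it permutes the complement $\overline{\A'}$. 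That makes your remark that ``pushing $a$ forward through $\psi$ is hopeless'' a little off the mark --- the paper never tries to control $\psi(a)$ for secondary $a$ directly; it controls $\psi$ on $\A'$ and lets bijectivity do the rest, which is exactly the same information your inverse trick extracts. Both routes are valid and use the same key proposition; yours just unwinds the bijection one step further than necessary.
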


\begin{proof}

The~graph~$B_{r_0} + v$ is a~subgraph of~$B_r$, hence by~Proposition~\ref{ballmain} the~map~$a \to \phi(v + a) - \phi(v)$ permutes~$\A'$, thus it must also permute~$\overline{\A'}$.

\end{proof}

Let~$B'_r = B^{\overline{\A'}}_{r}$ denote the~ball of~radius~$r$ in~the~Cayley graph~$\Gamma'$ of~the~additive group generated by~$\overline{\A'}$. Trivially, for~each~$r$ the~graph~$B'_r$ is a~subgraph of~$B_r = B^{\A}_r$. We will write~$\rho'(x) = \rho^{\overline{\A'}}(x)$ for~the~length of~the~shortest~path between~0 and~$x$ in~the~graph~$\Gamma'$.

\begin{lem} \label{autnotmain}

Suppose that~$0 \leq r' = r - r_0$. Then the~restriction of~any auto\-mor\-phism~$\phi \in \Aut(B_r)$ to~$B'_{r'}$ is an~automorphism of~$B'_{r'}$.

\end{lem}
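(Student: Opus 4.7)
The strategy is to leverage Proposition~\ref{ballnotmain} along shortest paths in $\Gamma'$ to transport the argument from the ``safe'' zone $\{v \in V(B_r) : \rho(v) \leq r - r_0\}$ to all of $V(B'_{r'})$. First I would observe that any $\phi \in \Aut(B_r)$ fixes the origin: since $\phi$ is an isomorphism $B_r \to B_r$, Proposition~\ref{balliso} gives $B_r = B_r + \phi(0)$, and a finite subset of $\Z^k$ cannot admit a non-zero period. Hence $\phi(0) = 0$, and what must be shown is that $\phi|_{V(B'_{r'})}$ is a bijection $V(B'_{r'}) \to V(B'_{r'})$ that respects the edges of $B'_{r'}$ in both directions.

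The key observation, which I would single out at the beginning, is that every $v \in V(B'_{r'})$ satisfies $\rho(v) \leq \rho'(v) \leq r' = r - r_0$, since any edge of $\Gamma'$ is also an edge of $\Gamma$. This is exactly the hypothesis required to invoke Proposition~\ref{ballnotmain} at $v$. To prove $\phi(V(B'_{r'})) \subseteq V(B'_{r'})$ I would take an arbitrary $v \in V(B'_{r'})$ together with a shortest $\Gamma'$-path $0 = v_0, v_1, \ldots, v_m = v$ of length $m = \rho'(v) \leq r'$. Each intermediate vertex satisfies $\rho(v_i) \leq \rho'(v_i) \leq i \leq r'$, so applying Proposition~\ref{ballnotmain} at $v_i$ with the secondary step $a = v_{i+1} - v_i$ yields $\phi(v_{i+1}) - \phi(v_i) \in \overline{\A'}$ for every $i$. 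Consequently $0 = \phi(v_0), \phi(v_1), \ldots, \phi(v_m) = \phi(v)$ is a path in $\Gamma'$ of length $m$, so $\rho'(\phi(v)) \leq m \leq r'$, i.e.\ $\phi(v) \in V(B'_{r'})$. Applying the same reasoning to $\phi^{-1} \in \Aut(B_r)$ yields the reverse inclusion, so $\phi|_{V(B'_{r'})}$ is a bijection.

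It remains to check that edges of $B'_{r'}$ are preserved in both directions. If $\{u,v\}$ is an edge of $B'_{r'}$, then $u \in V(B'_{r'})$ gives $\rho(u) \leq r'$ and $v - u \in \overline{\A'}$, so a single application of Proposition~\ref{ballnotmain} shows $\phi(v) - \phi(u) \in \overline{\A'}$; combined with $\phi(u), \phi(v) \in V(B'_{r'})$ from the previous paragraph, this exhibits $\{\phi(u), \phi(v)\}$ as an edge of $B'_{r'}$. The converse uses $\phi^{-1}$ in the same way. I expect the only delicate point to be the bookkeeping in the inductive chain of Proposition~\ref{ballnotmain} applications along the $\Gamma'$-path: one must verify that every intermediate vertex $v_i$ really stays within the radius $r - r_0$ needed for the proposition to apply, which is the reason for taking a shortest path in $\Gamma'$ (rather than in $\Gamma$) and for using the bound $\rho \leq \rho'$.
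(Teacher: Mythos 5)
Your proof is correct and follows essentially the same route as the paper's: transport a shortest $\Gamma'$-path from $0$ to $v$ through repeated applications of Proposition~\ref{ballnotmain}, using $\rho \le \rho'$ to keep the intermediate vertices inside the safe radius $r - r_0$. You are a bit more explicit than the paper in justifying $\phi(0)=0$ via Proposition~\ref{balliso} and in invoking $\phi^{-1}$ to close the bijectivity and converse edge-preservation steps (the paper leaves these to the finiteness of $B'_{r'}$ together with injectivity of $\phi$), but the substance of the argument is the same.
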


\begin{proof}

We have to~prove~$\phi(v) \in V(B'_{r'})$ for~all~$v \in V(B'_{r'})$, and~$\phi(v)\phi(u) \in E(B'_{r'})$ for all~$vu \in E(B'_{r'})$.

Suppose that~$v \in V(B'_{r'})$. Consider a~shortest path in~$B'_{r'}$ from~0 to~$v$, denote its vertices~$u_0$, \ldots, $u_t$ with~$u_0 = 0$,~$u_t = v$,~$t \leq r'$, and~$u_{i + 1} - u_i \in \overline{\A'}$ for~all integer~$i$ from~0 and~$t - 1$. For each vertex~$u_i$ we have \[\rho(u_i) \leq \rho'(u_i) \leq t \leq r - r_0,\] hence Proposition~\ref{ballnotmain} implies~$\phi(u_{i + 1}) - \phi(u_i) \in \overline{\A'}$. Thus there exists a~path~$\phi(u_0)$,~\ldots, $\phi(u_t)$ between~$\phi(u_0) = 0$ and~$\phi(u_t) = \phi(v)$ that has length at~most~$r'$ and consists exclusively of~transitions by~elements of~$\overline{\A'}$, therefore~$\phi(v) \in V(B'_{r'})$.

Similarly, consider an~edge~$xy \in E(B'_{r'})$. The~above reasoning implies that~$\phi(x), \phi(y) \in V(B'_{r'})$. Since~$xy$ is an edge of~$B'_{r'}$, we have that~$y - x \in \overline{\A'}$. By invoking Proposition~\ref{ballnotmain} once again we obtain that~$\phi(y) - \phi(x) \in \overline{\A'}$, hence~$\phi(x) \phi(y) \in E(B'_{r'})$.

\end{proof}

The~last component of~the~proof of~Theorem~\ref{finiteball} is the~following

\begin{prp} \label{deltanotmain}

There exists an integer~$\Delta$ such that for each vertex~$v \in G$ there is a~shortest path in~$\Gamma$ from~0 to~$v$ that contains at~most~$\Delta$ transitions by~elements of~$\overline{A'}$.

\end{prp}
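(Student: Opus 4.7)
The plan is to take any shortest path from $0$ to $v$ and repeatedly rewrite it so that no secondary generator appears too many times, using Proposition~\ref{mainshortcut} as the fundamental exchange tool. A path in the Cayley graph $\Gamma$ from $0$ to $v$ of length $s$ is the same data as a multiset $\{d_1,\ldots,d_s\}\subseteq \A$ with $d_1+\ldots+d_s=v$; because $G$ is abelian, the $d_i$ may be reordered at will and any such multiset yields a legitimate path. In particular, the number of secondary transitions in a path equals the number of $d_i$ belonging to $\overline{\A'}$.

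For each secondary $x\in\overline{\A'}$, invoke Proposition~\ref{mainshortcut} to fix an integer $K_x$ such that $K_x\cdot x=b_1+\ldots+b_m$ for some primary $b_i\in \A'$ with $m\le K_x$. Set \[\Delta=\sum_{x\in\overline{\A'}}(K_x-1).\] Now take any shortest path $P$ from $0$ to $v$, regarded as a multiset $M$ of generators. As long as some secondary element $x$ appears at least $K_x$ times in $M$, replace $K_x$ copies of $x$ by the $m\le K_x$ primary elements $b_1,\ldots,b_m$ coming from the decomposition of $K_x\cdot x$. The new multiset still sums to $v$, and its size is at most $|M|-K_x+m\le|M|$; since $|M|=\rho(v)$ was minimal, the new path is still a shortest path, and the count of copies of $x$ has strictly decreased. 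Moreover, no new secondary elements have been introduced, so this procedure terminates after finitely many steps, producing a shortest path in which each secondary $x$ appears at most $K_x-1$ times. Summing over $x\in\overline{\A'}$ bounds the total number of secondary transitions by $\Delta$.

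I do not expect any real obstacle: the entire argument is an exchange-style reduction, and the two conceptual ingredients (the equivalence of paths with generator multisets in an abelian Cayley graph, and the bounded decomposition of $K_x\cdot x$ into primary elements) are either immediate or already established. The mildest point of care is verifying that the replacement preserves shortest-path status, which reduces to the inequality $m\le K_x$ guaranteed by Proposition~\ref{mainshortcut}, and observing that repeated application must terminate because the secondary multiplicity strictly decreases and is never increased elsewhere.
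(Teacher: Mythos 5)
Your argument is correct and is essentially the same as the paper's: both proofs use Proposition~\ref{mainshortcut} as an exchange lemma and repeatedly replace $K_x$ copies of a secondary generator $x$ by at most $K_x$ primary ones until every secondary multiplicity drops below $K_x$. The only cosmetic difference is that you take $\Delta=\sum(K_x-1)$ and explicitly note termination, while the paper uses the slightly looser $\Delta=\sum K_x$ and leaves termination implicit.
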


\begin{proof}

Let us recall that by~Proposition~\ref{mainshortcut} for~any element~$a \in \A$ there exists an~integer~$K_a$ and a~repre\-sen\-ta\-tion~\[K_a a = \alpha_1 a_1 + \ldots + \alpha_s a_s,\] where~$a_i$ are primary elements, and~$\alpha_i$ are non-negative integers which sum does not exceed~$K_a$. Now, consider any shortest path from~0 to~$v$. Let~$a$ be an arbitrary secondary element. If the~path contains at~least~$K_a$ transitions by~$a$, we can interchange them by~$\alpha_1$ transitions by~$a_1$, \ldots, and~$\alpha_s$ transitions by~$a_s$. The~length of~the~path will not change since the~sum of~$\alpha_i$ is at~most~$K_a$ and the~path was one of~the~shortest.

Perform all possible replacements of~this kind for~all secondary elements. The~resulting path is one of~the~shortest in~$\Gamma$ from~0 to~$v$ and contains at~most~$\Delta = \sum_{a \in \overline{\A'}} K_a$ transitions by~secondary elements.

\end{proof}

\subsection{Proof of~Theorem~\ref{finiteball}}

We are now ready to~prove Theorem~\ref{finiteball}.

\begin{proof}
First, we will prove that all automorphisms of sufficiently large balls are linear. We will use induction by the size of $\A$. The~convex hull of~a~non-empty set~$\A$ can't have an~empty set of~vertices, hence~$|\overline{\A'}| < |\A|.$

If~$\A' = \A$, then the~claim follows immediately from~Lemma~\ref{mainlinear} with~$R(\A) = R_L$. Otherwise, put
\[R_1 = \max(r_0, R_L)\text{ (see Lemmas~\ref{mainsum} and~\ref{mainlinear})},\]
\[R_2 = \max\left(R(\overline{\A'}), \Delta,\max_{x \in \overline{\A'}} K_x\right)\text{ (see Propositions~\ref{deltanotmain} and~\ref{mainshortcut})},\]
and~$R(\A) = R_1 + R_2$. Suppose that~$r \geq R(\A)$ and~$\phi \in \Aut(B_r)$. By Lemma~\ref{mainlinear} there is a~linear map~$T: \R^k \to \R^k$ that agrees with~$\phi$ on~the~set~$A_r$. Denote~$B' = B^{\overline{\A'}}_{R_2}$. Since~$r_0 + R_2 \leq R_1 + R_2 \leq r$, Lemma~\ref{autnotmain} implies that the~restriction of~$\phi$ to~$V(B')$ is an automorphism of~$B'$. Moreover, let~$U$ denote the~linear subspace of~$\R^k$ spanned by~elements of~$V(B')$. Since~$R_2 \geq R(\overline{\A'})$, then by the induction hypothesis there is a~linear map~$T': U \to U$ that agrees with~$\phi$ on~$V(B')$.

Let us prove that $T$ agrees with $T'$ on $V(B')$. For~an~arbitrary~$x \in \overline{\A'}$ we have~$T'(x) = \phi(x)$. By~Proposition~\ref{mainshortcut} we have that~$K_x x$ can be represented as~a~sum of~at~most~$K_x$ primary elements. We also have~$R \geq K_x$, hence~$K_x x \in A_r$. But this implies \[T(x) = \phi(K_x x) / K_x = T'(x).\] Finally, $T'$ is uniquely determined by~its values on~elements of~$\overline{\A'}$, con\-se\-quent\-ly,~${T|_U = T'}$. It suffices to~notice that~$V(B') \subset U$.

Lastly, we will prove that~$\phi$ agrees with~$T$ on~all other vertices of~$B_r$. Choose an~arbitrary~$x \in V(B_r)$. By Proposition~\ref{deltanotmain} there exists a~shortest path in~$\Gamma$ from~0 to~$x$ that contains at most~$\Delta$ transitions by~elements of~$\overline{\A'}$; clearly, this path lies completely inside~$B_r$. Let us represent~$x = a + b$, where~$a$ is the~sum of~all transitions by~elements of~$\A'$, and~$b$ is the~sum of~all transitions by~elements of~$\overline{\A'}$. Further, let~$s$ denote the~number of~transitions by~elements of~$\overline{\A'}$. Since~$s \leq \Delta \leq R_2$, then~$b \in V(B')$, hence~$\phi(b) = T'(b) = T(b)$. The~graph~$B_{R_1} + b$ is a~subgraph of~$B_r$ since
\[R_1 + \rho(b) \leq R_1 + \Delta \leq R_1 + R_2 \leq r.\]
By~Lemma~\ref{mainlinear} we have that the~automorphism~$\phi_b \in \Aut(B_{R_1})$ defined by~the~formula~\[\phi_b(x) = \phi(b + x) - \phi(b)\] is linear on~elements of~$A_{R_1}$. Further, it has the same orientation as~$\phi$ since vertices~0 and~$b$ are connected. Consequently, we have~$\phi_b(x) = T(x)$, and~\[\phi(b + x) = T(b) + T(x) = T(b + x).\] Thus, linearity is established.

To~finish the~proof, we have to show that all automorphisms of large enough balls are extendable to automorphisms of $\Gamma$. We have that~$r \geq R(\A)$ implies linearity of~all automorphisms of~$B_r$. Every linear automorphism is uniquely determined by~its values on~elements of~$\A$; moreover, these values must form a~permutation of~$\A$. Suppose that~$\phi \in \Aut(B_r)$, and for~the~linear map~$T$ induced by~$\phi$ we have~$T(\Gamma) \neq \Gamma$. Then~the~graph~$T(\Gamma)$ must differ from~$\Gamma$ in a~vertex or an~edge at~distance~$d > r$ from~the~vertex~0. To eliminate~$\phi$ from~$\Aut(B_r)$, increase the~value of~$R(\A)$ to~$d$. Since there are only finitely many permutations of~$\A$, this process will require a~finite number of~steps, after which the~final value of~$R(\A)$ that satisfies Theorem~\ref{finiteball} is produced.
\end{proof}

Combining Theorem~\ref{finiteball} with Corollary~\ref{connballs}, we obtain a~``ball bundle'' version of~Theorem~\ref{finiteball}.

\begin{corollary} \label{finitebundle}

In assumptions of Corollary~\ref{connballs}, if we additionally have $r \geq R(\A)$, then there is a unique~non-degenerate affine map~$T: \R^k \to \R^k$ such that~$\phi(x) = T(x)$ for all~$x \in V(C)$. Furthermore,~$T$ defines~an~automorphism of~$\Gamma$.

\end{corollary}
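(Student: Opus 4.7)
The plan is to apply Theorem~\ref{finiteball} to each ball of~the~bundle separately and~then glue the~resulting linear pieces together using the~equal-orientation property of~Corollary~\ref{connballs}. Fix any $x_0 \in M$. For~every $x \in M$, the~translated map $\psi_x(y) := \phi(x + y) - \phi(x)$ is~an~automorphism of~$B_r$, hence by~Theorem~\ref{finiteball} (applicable because $r \geq R(\A)$) there is a~unique non-degenerate linear map $T_x \colon \R^k \to \R^k$ such that $\psi_x(y) = T_x(y)$ for~all $y \in V(B_r)$, and~$T_x$ extends uniquely to~an~element of~$\Aut_0(\Gamma)$.

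The~crucial step is to~verify that all maps $T_x$ coincide. By~Corollary~\ref{connballs} the~orientations $\chi^\phi_{B_r + x}$ are all equal to~a~common permutation $\chi^\phi$ of~$\A'$, and~by~the~definition of~orientation we have $T_x(a) = \phi(x + a) - \phi(x) = \chi^\phi(a)$ for~every $a \in \A'$. Now~$\A$ generates $\Z^k$ and~satisfies $\A = -\A$, so~$\A$ contains $k$ linearly independent vectors together with~their negatives; these $2k$ vectors place the~origin in~the~interior of~$\Conv \A = \Conv \A'$, forcing $\A'$ to~span $\R^k$. The~linear maps $T_x$ therefore agree on~a~spanning set and~thus coincide; denote the~common value by~$T$.

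Define $\tilde T(y) := T(y - x_0) + \phi(x_0)$, which is~a~non-degenerate affine map. To~show $\phi = \tilde T$ on~$V(C)$ I~would proceed in~two stages. First, by~induction along a~path in~$M$ from~$x_0$ to~an~arbitrary $x \in M$: whenever $x$ and~$x' = x + z$ with~$z \in \A$ both lie in~$M$, we~have $x' \in V(B_r + x)$, so~$\phi(x') - \phi(x) = T(z) = \tilde T(x') - \tilde T(x)$, propagating the~equality $\phi(x) = \tilde T(x)$ along the~path. Second, for~arbitrary $y \in V(C)$, choose $x \in M$ with~$y \in V(B_r + x)$; then $\phi(y) = \phi(x) + T(y - x) = \tilde T(x) + T(y - x) = \tilde T(y)$. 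Uniqueness of~$\tilde T$ is~immediate because $V(B_r + x_0) \subseteq V(C)$ contains $k + 1$ affinely independent points that determine any affine map, and~$\tilde T$ lies in~$\Aut(\Gamma)$ since it is the~composition of~an~element of~$\Aut_0(\Gamma)$ with~a~translation.

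The~main obstacle is ensuring consistency of~the~linear parts $T_x$ across the~bundle: without the~equal-orientation conclusion of~Corollary~\ref{connballs}, the~different balls could \emph{a~priori} induce different elements of~$\Aut_0(\Gamma)$. Once orientations are known to~coincide, the~observation that $\A'$ spans $\R^k$ closes this step, and~the~remaining propagation of~the~affine constant along the~connected set~$M$ is~routine.
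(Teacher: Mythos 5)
Your proof is correct and is exactly the intended combination of Theorem~\ref{finiteball} and Corollary~\ref{connballs}; the paper states the corollary without written proof, and your argument fills in precisely those details (applying Theorem~\ref{finiteball} ball-by-ball, identifying the linear parts via the common orientation on~$\A'$ which spans~$\R^k$, and propagating the affine constant along the connected set~$M$).
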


\section{Strict and/or injective $\A$-embeddability, $k \geq 2$} \label{zk}

\subsection{The {\bf LOGIC-ENGINE} problem} \label{enginedesc}

The~term \emph{``logic engine''} (coined by~Eades and Whitesides in~\cite{eades}) refers to~a~certain type of~a~geometric setup that is designed to~``mechanically'' emulate solution of~the~{\bf NAE-3-SAT} {\it (not-all-equal 3-satisfiability)} problem. The~earliest construction of~this kind was used by~Bhatt and Cosmadakis (see~\cite{bhatt}) to~establish NP-hardness of~embedding a~graph in~the~square grid with~unit-length edges. Since then, similar setups were employed in~a~number of~papers concerned with~complexity of~geometric problems (\cite{gregori,eades,eadestrees,fekete,kitching}).

Let us give a~rough description of~a~planar logic engine setup. An~{\it axle} is rigidly mounted on~a~rigid {\it frame} (a~{\it rigid} configuration is~one that allows unique realization up to~isometry). There are~$n$ straight rigid {\it rods} attached to~the~axle at~$n$ equidistant points; the~rods extend to~the~both sides of~the~axle and always stay perpendicular to~it. Naturally, each rod has two possible directions relative to~the~axle, and directions of~different rods are independent. Additionally, each rod has several straight rigid {\it flags} attached to~it that must stay perpendicular to~the~rod and thus have two possible directions each. Flags can be attached to~the~rods on~$m$ different levels at~both sides of~the~axle.

\begin{center}
\begin{figure}[htb]
\centering
\includegraphics{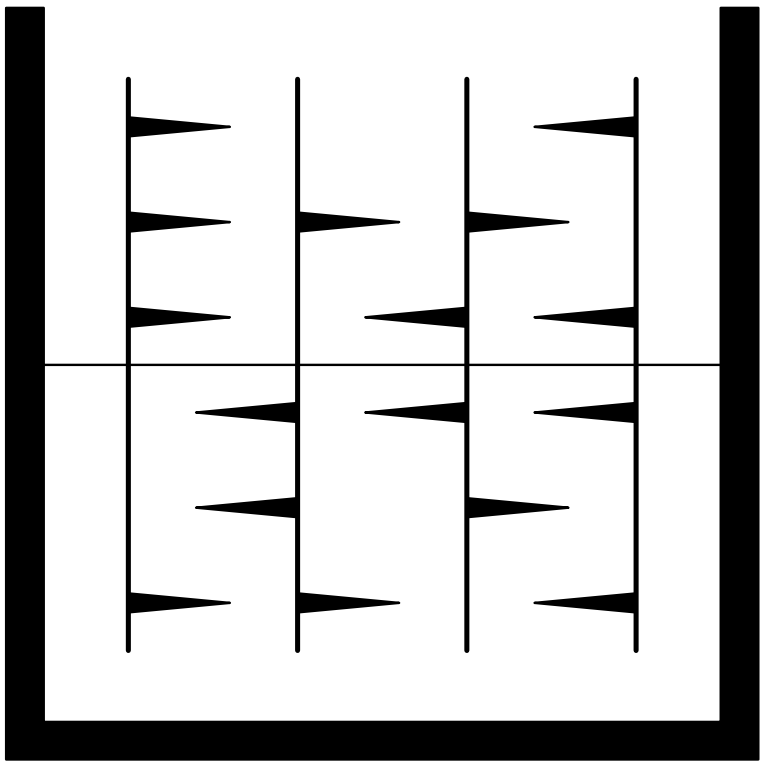}
%\captionsetup{justification=centering,margin=2cm}
\caption{An~illustration of~a~logic engine with~$n = 4$ rods and~$m=3$ flag levels on~both sides of~the~axle}
\label{enginepic}
\end{figure}
\end{center}

The~length of~flags is adjusted so that adjacent flags on~the~same level will collide if pointed towards each other. Also, the~frame prevents the~flags attached to~the~outer\-most rods from~pointing outwards.

The~structure of~a~logic engine is defined by~the~numbers~$n$ and~$m$, along with~$2nm$ numbers~$a_{ijk}$ chosen from~$\{0, 1\}$ that describe whether a~flag is attached to~$i$-th rod on~$j$-th level on~$k$-th side of~the~axle. A~logic engine is {\it realizable} if directions of all rods and flags can be chosen so that no~flag collides with~another flag or the~frame. We pose the~decision problem {\bf LOGIC-ENGINE} of~determining realizability of~a~given logic engine.

\begin{prp}

{\bf LOGIC-ENGINE} is NP-complete.

\end{prp}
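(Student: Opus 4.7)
The plan is to show membership in NP and then reduce from NAE-3-SAT, which is classically NP-complete.

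Membership in NP is immediate: a realization is described by specifying, for each rod and each flag, a single bit encoding the chosen direction; this yields a certificate of size $O(nm)$. One can verify in polynomial time that no two adjacent flags on the same level collide, and that no outermost flag violates the frame.

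For NP-hardness, I would follow the template of Bhatt and Cosmadakis~\cite{bhatt} and Eades and Whitesides~\cite{eades}. Given a NAE-3-SAT instance with variables $x_1, \ldots, x_n$ and clauses $C_1, \ldots, C_m$, construct a logic engine with $n$ rods (one per variable) and $m$ levels (one per clause). The two orientations of rod~$i$ encode the truth value of~$x_i$. For a clause $C_j$ involving literals over $x_{i_1}, x_{i_2}, x_{i_3}$, flags are placed at level~$j$ on the corresponding rods, with their sides of attachment chosen so that a flag is forced to point outward (toward the frame or toward the neighbouring flag) precisely when the literal it represents evaluates to \emph{true}. The lengths of the flags and the position of the frame are calibrated so that a collision on level~$j$ occurs iff the three flags are unanimous in direction, which is exactly the forbidden case of NAE-3-SAT. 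Hence the engine is realizable iff the given instance is NAE-satisfiable, and the construction is clearly polynomial in~$n + m$.

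The main obstacle is that in the logic engine only flags on \emph{adjacent} rods can collide, so clauses whose variables lie on non-adjacent rods cannot be encoded directly by three bare flags. The standard remedy is to augment each clause level with ``propagator'' or dummy flags on intervening rods that transmit the direction constraint along a chain of adjacent rods, or to reorder the variables and pad with auxiliary flags so that the three active flags of each clause end up on consecutive rods. One must verify that these auxiliary flags do not induce spurious constraints across different clause levels and that the frame condition on the outermost rods is compatible with all clause gadgets. Since every such gadget adds only a constant number of flags per clause, the reduction remains polynomial, and the equivalence between engine realizability and NAE-satisfiability is preserved, completing the proof.
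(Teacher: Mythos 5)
The paper disposes of this proposition in one line: membership in NP is immediate, and NP-hardness is obtained by citing the reductions from \textbf{NAE-3-SAT} already carried out by Bhatt--Cosmadakis and Eades--Whitesides. Your proposal goes further and tries to sketch that reduction, but the gadget you describe is inverted relative to the actual construction. In the Bhatt--Cosmadakis logic engine, a flag is placed at \emph{every} rod/level/side position \emph{except} those corresponding to a literal appearing in the clause; a level-side pair then admits a collision-free orientation if and only if at least one rod on that level is flagless (a gap), which in turn happens iff the current rod orientations satisfy (resp.\ falsify) some literal of the clause --- yielding exactly the not-all-equal condition on top and bottom. Because flags occupy all non-clause rods by default, every intervening rod is already flagged, and the ``main obstacle'' you identify (clause variables on non-adjacent rods) does not arise. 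Your proposed ``remedy'' (dummy flags on all intervening rods) is essentially an ad hoc rediscovery of the correct placement rule, but as written it starts from a wrong gadget (flags only on clause rods, forced outward when the literal is true) and then patches it by hand-waving, so the equivalence with NAE-satisfiability is not actually established. Since the paper only asserts this as a citation, the safest course here is to either cite the construction as the paper does, or, if you want to spell it out, state the flag placement rule correctly (flags everywhere except literal positions) and then run the gap argument on each level-side pair.
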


\begin{proof}

Clearly, the~problem is in~NP since logic engine realizability is easily certifiable with a small certificate. NP-hardness of~the~problem is shown in, say,~\cite{bhatt} and~\cite{eades} by~reducing NP-hard problem~{\bf NAE-3-SAT} to~{\bf LOGIC-ENGINE}.

\end{proof} 

Our goal is to~reduce~{\bf LOGIC-ENGINE} to~strict and/or injective~$\A$-embeddability in~$\R^1$. We will do this by~explicitly constructing a~graph~$X$ such that a~given logic engine is realizable if and only if~$X$ is isomorphic to~a~subgraph of~$\Gamma$.

\subsection{Choice of basis in $\Z^k$}

Before we proceed, let us find a convenient coordinate system in $G \sim \Z^k$. If~$a_1, \ldots, a_m \in \R^n$, then let~$(a_1, \ldots, a_m)$ denote the~$n \times m$ matrix which columns contain coordinates of~the~vectors~$a_1$, \ldots, $a_m$. We will call a~collection of~vectors~$a_1, \ldots, a_m \in \R^n$ \emph{non-degenerate} if the~matrix~$(a_1, \ldots, a_m)$ has rank equal to~the~di\-men\-sion~$n$.

\begin{lem} \label{basis}

Suppose that~$A = \{a_1, \ldots, a_m\} \subset \R^n$ is a~non-degenerate collection of~vectors. Then it is possible to~choose a~basis~$b_1, \ldots, b_n$ of~$\R^n$ among~elements of~$A$ so that for~each~$a \in A$ all coefficients~$\alpha_i$ of~the~unique representation~$a = \sum \alpha_i b_i$ do not exceed~1 by~absolute value.

\end{lem}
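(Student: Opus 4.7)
The plan is to select the basis by a determinant-maximization argument. Since $A$ is non-degenerate, the matrix $(a_1, \ldots, a_m)$ has rank $n$, and therefore at least one $n$-subset of $A$ forms a basis of $\R^n$. Among all $n$-element subsets of $A$ that constitute a basis, I choose $\{b_1, \ldots, b_n\} \subseteq A$ so as to maximize the quantity $|\det(b_1, \ldots, b_n)|$; this maximum is attained because $A$ is finite.

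Next, for an arbitrary $a \in A$, write $a = \alpha_1 b_1 + \ldots + \alpha_n b_n$. By Cramer's rule,
\[
\alpha_i = \frac{\det(b_1, \ldots, b_{i-1}, a, b_{i+1}, \ldots, b_n)}{\det(b_1, \ldots, b_n)}.
\]
If $\alpha_i = 0$ the bound $|\alpha_i|\leq 1$ is trivial, and if $\alpha_i \neq 0$ then the tuple $(b_1, \ldots, b_{i-1}, a, b_{i+1}, \ldots, b_n)$ consists of elements of $A$ and has non-zero determinant, hence is itself a basis of $\R^n$ drawn from $A$. Its determinant has absolute value $|\alpha_i|\cdot |\det(b_1,\ldots,b_n)|$, and by the maximality of $|\det(b_1,\ldots,b_n)|$ this cannot exceed $|\det(b_1,\ldots,b_n)|$. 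Dividing yields $|\alpha_i|\leq 1$, as required.

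This finishes the proof; there is essentially no obstacle beyond noticing the Cramer's rule identity. The only minor point is ensuring that a basis of $\R^n$ really can be extracted from $A$, which is exactly the non-degeneracy assumption. No linear-algebraic machinery beyond the determinant formula is needed, and the argument does not require $A$ to be symmetric or to lie in $\Z^n$, so it applies in the generality stated.
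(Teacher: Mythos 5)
Your proof is correct and is essentially the paper's proof: both maximize $|\det(b_1,\ldots,b_n)|$ over $n$-subsets of $A$ and then apply Cramer's rule, reading off $|\alpha_i|\leq 1$ from maximality. You spell out the case split and the observation that the numerator tuple is itself an $n$-subset of $A$ a bit more explicitly, but the argument is the same.
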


\begin{proof}

Choose~$b_1, \ldots, b_m$ so that~$|\m{det}(b_1, \ldots, b_m)|$ is largest possible, and put~$B = (b_1, \ldots, b_m)$. Since~$A$ is a~non-degenerate collection, we must have~$\m{det}B \neq 0$. Coordinates of~a~vector~$a\in A$ with~respect to~the~basis~$b_1, \ldots, b_m$ are defined by~the~unique vector~$x \in \R^n$ that satisfies the~equation~$Bx = a$. By~Cramer's rule we~have \[x_i = \frac{\m{det}(b_1, \ldots, b_{i - 1}, a, b_{i + 1}, \ldots)}{\m{det}B}.\] Maximality of~$|\m{det}B|$ implies~$|x_i| \leq 1$, thus the lemma is proven.

\end{proof}

Let $\B$ denote the~basis chosen from~the~set~$\A'$ via~Lemma~\ref{basis}. In~the~sequel, coordinates of~all elements of~$G$ will be considered exclusively with~respect to~the~basis~$\B$. We also define the~{\it norm~$||x||$} of~an~element~$x \in G$ as~the~value of~$l_{\infty}$-norm of~$x$ with~respect to~$\B$ (here we override the~definition of~norm introduced in~Chapter~\ref{balls}).

\subsection{Balls locality and solidity} \label{locality}

Let~$S_a = [-a, a]^k \subset \R^k$ denote the~hypercube with~side length~$2a$ centered at~the~origin.

By construction, the basis~$\B$ consists of primary elements. Lemma~\ref{basis} implies that all elements of $\A'$ belong to $S_1 = [-1, 1]^k$, hence by convexity all elements of $\A$ belong to $S_1$. It follows that vertices of $B_r$ are confined to $S_r$.

The~constructions will consist of~$\Gamma$-rigid ball bundles that correspond to~independent rigid components connected via~auxiliary edge chains. Following the~mechanical analogy, we expect the~bundles to~behave like physical objects, for~instance, different bundles should not be able to~collide in~any injective embedding. This is not generally the~case since vertices of~different bundles may permeate each other. However, large enough balls have ``solid zones'' that never collide for~disjoint balls:

\begin{prp} \label{black}

Suppose that~$r \geq \max_{||x|| \leq 1}\rho(x)$. Then in~any injective embedding~$\phi$ of~$(B_r + z) \sqcup (B_r + z')$ in~$\Gamma$ with $B_r + z$, $B_r + z'$ disjoint, we must have $(S_1 + \phi(z)) \cap (S_1 + \phi(z')) = \varnothing$.

\end{prp}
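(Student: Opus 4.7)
The plan is to derive a contradiction by exhibiting a lattice point that lies in both image balls, in violation of injectivity of $\phi$. The whole argument rests on the hypothesis on $r$, which guarantees that the ball $B_r$ covers every lattice point of the hypercube $S_1$.

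First I would verify that each ball maps onto the ball around the image of its center. Any translate $B_r + z$ is connected, because a shortest path in $\Gamma$ from $z$ to any $x \in V(B_r + z)$ has length $\rho(x - z) \le r$ and therefore stays inside $B_r + z$. So for every such $x$ there is a walk in $B_r + z$ from $z$ to $x$ of length at most $r$, and $\phi$ carries it to a walk of the same length in $\Gamma$ from $\phi(z)$ to $\phi(x)$. This forces $\phi(V(B_r + z)) \subseteq V(B_r + \phi(z))$; injectivity combined with $|V(B_r + z)| = |V(B_r + \phi(z))|$ upgrades the inclusion to equality, and likewise for $z'$. Disjointness of the two original vertex sets together with injectivity of $\phi$ then yields \[V(B_r + \phi(z)) \cap V(B_r + \phi(z')) = \varnothing.\]

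Next I would invoke the hypothesis $r \ge \max_{||x|| \le 1} \rho(x)$: every $x \in G$ with $||x|| \le 1$ satisfies $\rho(x) \le r$, so $S_1 \cap G \subseteq V(B_r)$ and, after translation, $(S_1 + w) \cap G \subseteq V(B_r + w)$ for any $w \in G$. Suppose for contradiction that $(S_1 + \phi(z)) \cap (S_1 + \phi(z')) \neq \varnothing$. Both centers $\phi(z), \phi(z')$ are integer points, so this intersection is a Cartesian product of closed intervals of the form $[\max(\phi(z)_i,\phi(z')_i)-1,\,\min(\phi(z)_i,\phi(z')_i)+1]$ with integer endpoints; being non-empty, each such interval contains an integer, so the intersection contains a lattice point $p \in G$. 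Then $p$ lies in $V(B_r + \phi(z)) \cap V(B_r + \phi(z'))$, contradicting the emptiness established above. The only step needing real care is the first one (pinning down $\phi(V(B_r+z))$ exactly as $V(B_r + \phi(z))$); the rest is a short lattice-point observation made available by the precise choice of $r$.
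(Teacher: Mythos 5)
Your proof is correct and follows essentially the same route as the paper's: identify the image balls as $B_r + \phi(z)$ and $B_r + \phi(z')$ (the paper cites Proposition~\ref{balliso}, you re-derive it), note that they must be disjoint by injectivity, and then produce a lattice point in the intersection of the two black regions that is forced by the hypothesis on $r$ to lie in both image balls. The only cosmetic difference is that the paper picks a specific vertex of the hypercube $S_1 + \phi(z)$ as the witnessing lattice point, whereas you observe that the intersection is a box with integer corners and so contains one.
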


\begin{proof}

Suppose that there is a~point~$x \in (S_1 + \phi(z)) \cap (S_1 + \phi(z'))$. We can choose~$x$ to be a~vertex of~the~hypercube~$S_1 + \phi(z)$. In~this case,~$x \in V(\phi(B_r + z))$, in~particular,~$x \in G$. Since we also have~$||x - \phi(z')|| \leq 1$, we must have~$x \in V(\phi(B_r + z'))$, but the~balls~$\phi(B_r + z)$ and~$\phi(B_r + z')$ must be disjoint since~$\phi$ is injective, contradiction.

\end{proof}

Note that $\max_{||x|| \leq 1}\rho(x)$ is taken over a finite set of points of $\Z^k$, and is, therefore, well-defined. In the sequel, we put \[r = \max\left(R(\A), \max_{||x|| \leq 1} \rho(x)\right).\]

We will use the~following convention when discussing ball bundles that build up parts of~a~construction. To~each ball~$B_r + z$ we assign a~``\emph{black}'' region~$S_1 + z$ and a~``\emph{gray}'' region~$S_r + z$. ``Black'' regions of~(balls of)~different bundles can not intersect in~any injective embedding. On~the~other hand, bundles can only obstruct each others' injective embedding when their ``gray'' regions intersect. Graphically, ``black'' regions will be colored with~dark gray, and ``gray'' regions will be colored with~light gray.

\begin{center}
\begin{figure}[h]
\centering
\begin{subfigure}[b]{0.9\textwidth}
\centering
\includegraphics[scale=0.7]{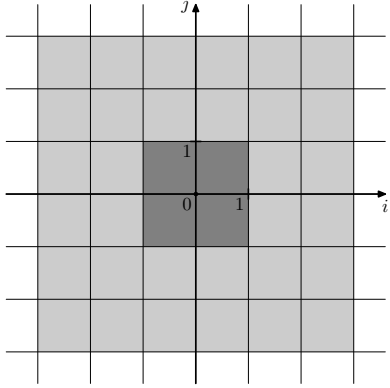}
\end{subfigure}
\captionsetup{justification=centering,margin=2cm}
\caption{``Black'' and ``gray'' regions of a ball of radius 3 centered at the origin}
\end{figure}
\end{center}

\subsection{Construction for~the~case~$k = 2$} \label{z2}

Suppose that~$k = 2$. Let~$i$ and~$j$ denote the~two elements of~$\B$.

\subsubsection{Construction outline}

\begin{center}
\begin{figure}[htbp]
\centering
\begin{subfigure}{\textwidth}
\centering
\includegraphics[width=84mm]{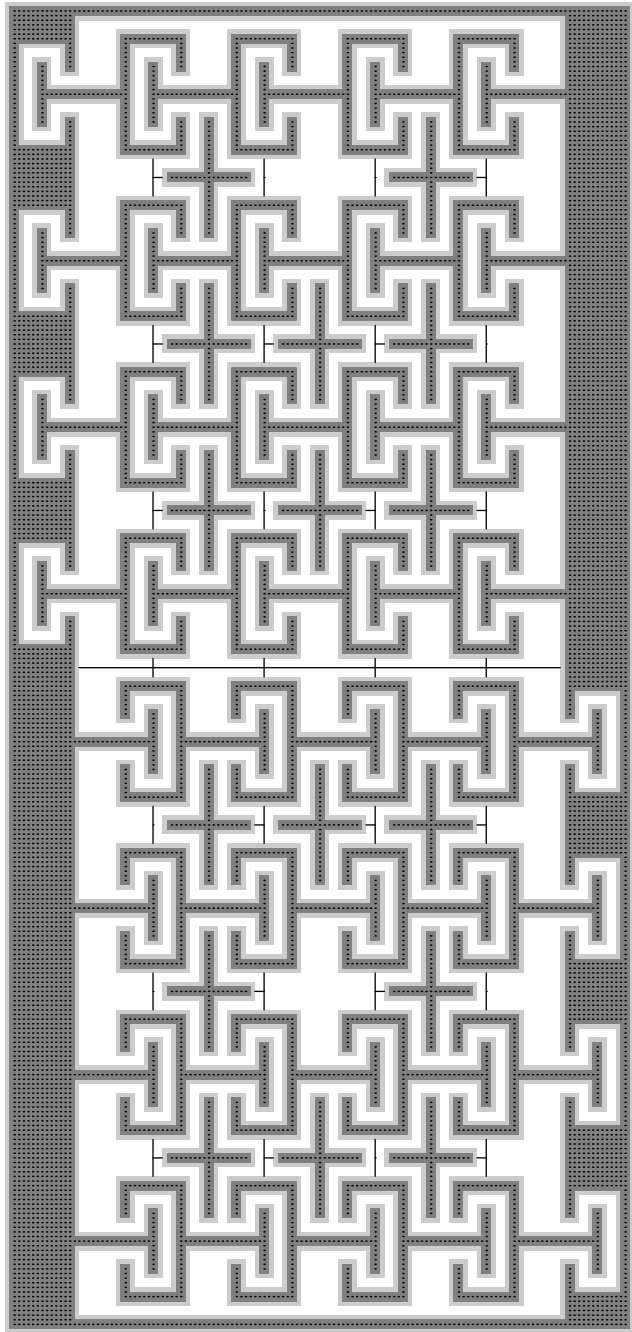}
\end{subfigure}
\captionsetup{justification=centering,margin=2cm}
\caption{Schematic picture of~the~graph~$X$ constructed by~the~logic engine pictured on~Fig.~\ref{enginepic}. Bold dots denote balls' centers, line segments denote auxiliary chain edges.}
\label{schemepic}
\end{figure}
\end{center}

Let numbers $a_{ijk}$ describe a logic engine with~$n$ rods and~$m$ flag levels. We will construct a graph $X$ such that $X$ is (strictly/non-strictly) injectively embeddable in~$\R^1$ if and only if the logic engine is realizable. As we said before, the construction will consist of bundles of balls of radius $r$ and auxiliary edge chains between them.

The \emph{skeleton}~$\Theta$ will consist of a~\emph{frame}~$O$ and an~\emph{axle}~$C$. The~frame~$O$ is a~bundle of~balls centered at~integer points of~the~rectangular border $[0; W] \times [-H; H]$ (with parameters~$W$ and~$H$ to be chosen later). We will later modify~$O$ by adding \emph{docking components} (see below) as shown on Figure~\ref{schemepic}. The~axle~$C$ is~a~path~$(0, 0)$, $(1, 0)$, \ldots, $(W, 0)$. Finally,~$\Theta$ is induced by~$V(O) \cup V(C)$.

Next, each of~the~$n$ rods of~the~logic engine will be represented by~two \emph{chains} anchored at~the~same point on~the~axle. Each chain consists of~$m + 1$~\emph{links}. Each link is a~bundle of~balls centered at~integer points of~the~rectangle~$[0, w] \times [0, h]$ with addition of~docking components (parameters~$w$ and~$h$ will be chosen later as~well). We add edge chains of certain length to~connect adjacent links, as~well as~the~first link in~the~chain with~the~anchor point on~the~axle.

Finally, we attach \emph{flags} to~each chain according to~flags positions in the logic engine. Each of the flags is a~ball bundle with~centers in~integer points of~the~``cross''
\[\{(x, 0) \mid x \in [-w_f; w_f]\} \cup \{(0, y) \mid y \in [-h_f; h_f]\}.\]
Flags can be attached at~$m$ possible locations at~midpoints of~edge chains between~adjacent links. Each flag points to~the~left or to~the~right in any embedding. Flags' positions satisfy two rules:

\begin{itemize}

\item no flag can point ``into~the~wall'',

\item if two flags are attached on~the~same level on~adjacent chains, then in~any injective embedding such that their chains are on~the~same side on~the~axle, the~flags must not point towards each other.

\end{itemize}

All the~restrictions described above guarantee that injective embeddability of~the~con\-structed graph~$X$ is equivalent to~realizability of~the~logic engine.

In~what follows we describe all parts of~the~construction in~detail, and also provide a~way to~choose parameters $W$, $H$, $w$, $h$, $w_f$, $h_f$ so that injective embeddings of~$X$ behave as~expected. We will write~$O(1)$ for~any value with~absolute value bounded by~a~constant independent of~$n$ and~$m$ (in particular, note that~$r = O(1)$). Also,~we will write~$c_i$ for~certain constants when explicit value is unimportant.

\subsubsection{Link-axle and link-link connections} \label{links}

Let $\phi$ be any injective embedding of~the~graph~$X$ in~$\Gamma$, and~$T$ be the~affine automorphism of~$\Gamma$ induced by~the~restriction~$\phi|_{O}$ by~Corollary~\ref{finitebundle}. Then~$T^{-1} \phi$ must also be an~injective embedding of~$X$, hence we can assume that~$T \equiv 1$, and~$\phi$ acts identically on~$O$. Furthermore, $\phi$ must also act identically on $C$; indeed, all edges of $C$ correspond to the primary direction $i$, hence there is a~unique shortest path between~$(0, 0)$ and~$(W, 0)$, and~$\phi$ must preserve its vertices.

\begin{center}
\begin{figure}[htb]
\captionsetup{justification=centering}  
\centering
\begin{subfigure}[t]{0.5 \textwidth}
\centering
\includegraphics[width=39mm]{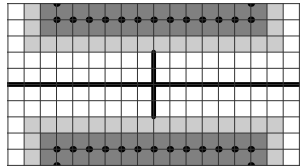}
\caption{$j \pm i \not \in \A$ case}
\end{subfigure}
\begin{subfigure}[t]{0.5 \textwidth}
\centering
\includegraphics[width=39mm]{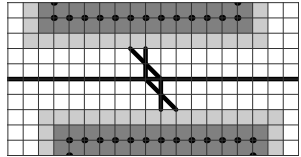}
\caption{$j - i \in \A$ case}
\end{subfigure}
\caption{Link-axle connection}
\label{linkaxispic}
\end{figure}
\end{center}

Consider the~connection between a~link of~width~$w$ and~the~axle via an auxiliary two-edge chain. To~ensure possibility of~a~strict embedding, the~structure of~the~joint will depend on~whether the~elements~$j \pm i$ belong to~$\A$ (note that~$j + i$ and~$j - i$ cannot belong to~$\A$ simultaneously since that would imply that~$j$ is not a~vertex of~$\Conv \A$ and, therefore, not primary). Fig.~\ref{linkaxispic} depicts two possible link-axle joints along with~their embeddings (the~$j + i \in \A$ case is symmetrical to~the~(b) case). Note that in both cases the embeddings are locally induced, that is, there are no hidden edges between the axle and the edge chain. We also have to~add all possible edges between the~link and the~chain vertices. Note that these edges can only be incident to~the~midpoint of~the~chain since we have~$||x|| \leq 1$ for~all~$x \in \A$.

\begin{center}
\begin{figure}[h]
\begin{subfigure}[t]{0.5\textwidth}
\centering
\includegraphics[width=39mm]{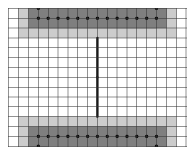}
\captionsetup{justification=centering}  
\caption{Link-link connection (when~$j \pm i \not \in \A$)}
\end{subfigure}
\begin{subfigure}[t]{0.5\textwidth}
\centering
\includegraphics[width=39mm]{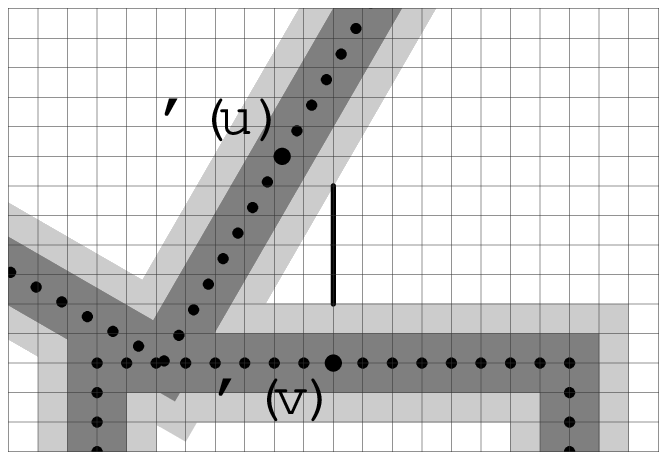}
\captionsetup{justification=centering}  
\caption{Obstructed embedding of~adjacent links in~the~$i_1 \neq \pm i_2$ case after increasing~$w$}
\label{linkaxiscrosspic}
\end{subfigure}
\caption{Link-link connection}
\end{figure}
\end{center}

Adjacent links will be connected via~edge chains of~length~$l = 2r + 4$ in a~similar fashion. Let us show that it is possible to~choose~$w$ in~such a~way that in~any injective embedding the~horizontal edges of~all chains are parallel to~each other.

Suppose that~$L_1$ and~$L_2$ are adjacent links in a~chain, and the~link chain is located to~the~top of~the~axle (the~situation is completely symmetrical at~the~bottom). Let~$\phi_1$ and~$\phi_2$ be the~affine automorphisms of~$\Gamma$ defined by~restrictions of~$\phi$ to~$L_1$ and~$L_2$ respectively. Suppose that~$\phi_1$ and~$\phi_2$ map the~vector~(not the~point!)~$i$ to~vectors~$i_1$ and~$i_2$ respectively, furthermore, suppose that $i_1 \neq \pm i_2$. Since~$i_1$ and~$i_2$ are primary directions, they must be vertices of~$\Conv \A$, thus~$i_1 \neq \pm i_2$ must imply that~$i_1$ and~$i_2$ are not parallel.

Consider the~ball centers~$u_t = u + t i$ of~the~``top edge'' of~$L_1$, and ball centers~$v_t = v + t i$ of~the~``bottom edge'' of~$L_2$, where~$t$ is an~integer parameter that ranges within~$[-w; w]$ (see Fig.~\ref{linkaxiscrosspic}). Their images under the~embedding~$\phi$ lie on~two non-parallel lines~$\phi(u) + ti_1$ and~$\phi(v) + ti_2$. Let us find the~intersection point of~these lines:~$\phi(u) + t_1 i_1 = \phi(v) + t_2 i_2$, with $t_1$, $t_2$ not necessarily integer. Next, we can find integer~$t'_1$ and~$t'_2$ near~$t_1$ and~$t_2$ respectively, such that $||\phi(u_{t'_1}) - \phi(v_{t'_2})|| \leq 1$. If we introduce the requirement~$w > \max(|t'_1|, |t'_2|)$, then the~black regions of~balls centered at~$\phi(u_{t'_1})$ and~$\phi(v_{t'_2})$ must intersect, hence this injective embedding becomes forbidden.

Finally, note that we must have~$||\phi(v) - \phi(u)|| \leq l$, hence~$\max(|t'_1|, |t'_2|) < l \alpha$, where~$\alpha = \alpha(i_1, i_2)$ depends only on~$i_1$ and~$i_2$. That is, to~forbid injective embeddings with~$i_1 \neq \pm i_2$ it suffices to require~$w > l\alpha_{max}$, where~$\alpha_{max}$ is the~maximal value of~$\alpha(i_1, i_2)$ among all pairs~$i_1, i_2 \in \A'$ with~$i_1 \neq \pm i_2$. Since~$l = O(1)$, the~obtained requirement can be written as \begin{align} \label{linkshor} w > c_0. \end{align}

A~similar reasoning applied to~the~link-axle connection implies that in any injective embedding with~$i_1 \neq \pm i$ the~bottom edge of~the~first link will be obstructed by~the~axle vertices, therefore we must have~$i_1 = \pm i$. Hence, under requirement~\eqref{linkshor} all horizontal edges of~all links must preserve their orientation. Let us note that we haven't yet ensured that horizontal edges can't have opposite directions, that is, the~$i_1 = -i_2$ case is not ruled out yet.

Let us choose the~distance $\delta$ between~adjacent link-axle connection points to~be equal to~$w + 4r + 4 = w + O(1)$. Further, let the~distance from the extreme link-axle connection points and the frame vertices be at least $\delta / 2$, then it is possible to place the~links without colliding with~the~frame. Put~$W = (n + 2) \delta$, then a~rectangle of~width~$W$ fits all~$n$ links as~well as~docking components (see Fig.~\ref{schemepic}). Similarly to~the~choice of~$w$, we can show that it is possible to~choose large enough \begin{align} \label{linksver} h > \beta W, \end{align} such that in~any embedding all vertical edges of~all links are vertical or collide with the~frame.

Furthermore, let us choose~$H$ so that to~fit all link chains inside~the~frame vertically. It can be verified that the~height of the~top link's gray region is at most~$2 + h + 2r$ (height of~the~gray region of~the~bottom link) + $m \times (h + l + 2r)$ (vertical translations between adjacent links). Putting~$H = 2 + h + 2r + m(h + l + 2r) + r + 2$, we guarantee that the~distance between gray regions of~the~top side of~the~frame and the~top link is at~least~2. Clearly,~$H = O(mh)$.

\subsubsection{Docking components}

\begin{center}
\begin{figure}[htb]
\centering
\begin{subfigure}{\textwidth}
\centering
\includegraphics[width=129mm]{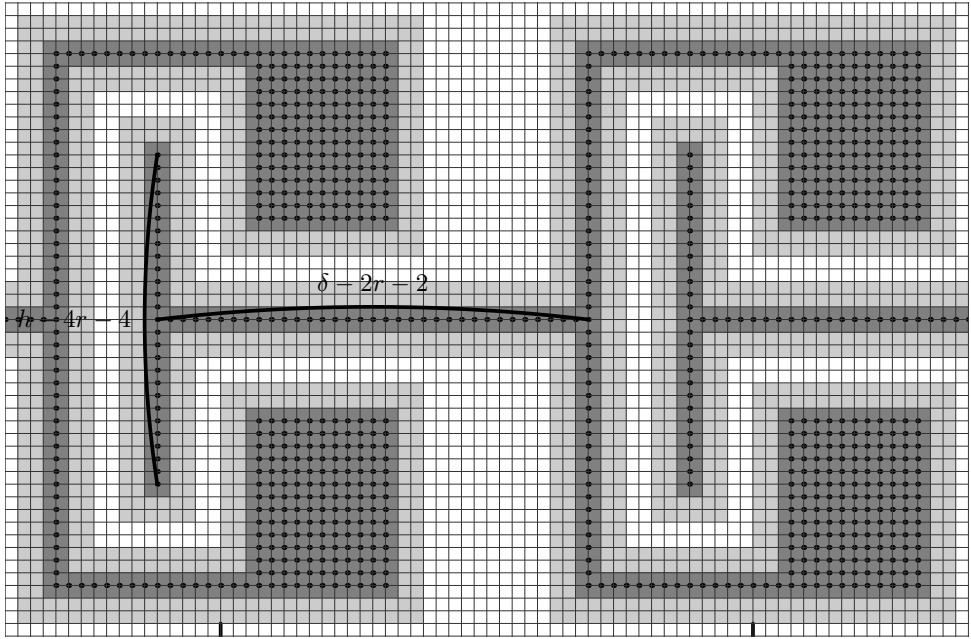}
\end{subfigure}
\captionsetup{justification=centering}
\caption{Docking component}
\label{couplingpic}
\end{figure}
\end{center}

For~the~construction to~behave properly under~possible embeddings, we must ensure that adjacent flags on~the~same level and side (relative to~the~axle) are located at~close height in~any embedding. However, we have ``flexible'' connections between adjacent links in~a~chain, thus there can be a~significant discrepancy in~height of~neighbouring links and flags. Moreover, these discrepancies can add up without a~limit since the~number of~levels~$m$ is unbounded. We will introduce \emph{docking components} in each of~the~links and in~the~frame to~ensure that the~height discrepancy of~adjacent links stays bounded.

Consider a~pair of~adjacent links located on~the~same level to~the~top of~the~axle. Let us modify each of~the~links by~expanding the~bundles with~extra ball centers lying on~a~forked T-shaped~\emph{``antenna''} (see Fig.~\ref{couplingpic}). Further, let us make a~\emph{``receiver''} in~each of~the~links by~removing all balls with~gray region within distance~1 of~the~gray region of~the~adjacent link's antenna. We will choose dimensions of~antennas as~large as~possible so that a~link remains connected after making a~suitable receiver.

As shown above, in~any injective embedding all links must preserve orientation under require\-ments~\eqref{linkshor} and~\eqref{linksver}. Our intention is that in~any embedding each pair of~neighbouring links must be interlocked, that is, each~antenna must be inside the~cor\-res\-ponding receiver.

For a~pair of~neighbouring links~$L_1$ and~$L_2$ to~the~top of~the~axle let us consider vertices~$u_1$,~$v_1$,~$u_2$,~$v_2$ (see Fig.~\ref{couplingpic2}) that are the~endpoints of~the~chains connecting~$L_1$ and~$L_2$ with~the~axle or with a~previous link; in~the~latter case we will assume that the~previous links are interlocked. We can verify that in any case~$\phi(v_2) - \phi(v_1) = (\delta + O(1), O(1))$ holds.

\begin{center}
\begin{figure}[htb]
\centering
\begin{subfigure}{\textwidth}
\centering
\includegraphics{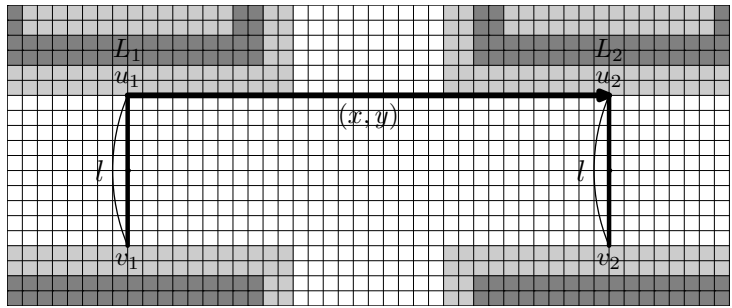}
\end{subfigure}
\captionsetup{justification=centering}
\caption{Edge chains connecting adjacent link pairs}
\label{couplingpic2}
\end{figure}
\end{center}

First, suppose that the~horizontal edges of~$L_1$ and~$L_2$ have the~same direction, in that case we must have~$\phi(L_2) = \phi(L_1) + (x, y)$. If~$L_1$ and~$L_2$ are not interlocked in~$\phi$, then we have either~$x > \delta + w + c_1$ or~$|y| > h + c_2$. But
\[(x, y) = \phi(u_2) - \phi(u_1) = (\phi(u_2) - \phi(v_2)) - (\phi(u_1) - \phi(v_1)) + (\delta + O(1), O(1)).\] Note that~$u_1$ and~$v_1$ are connected by~a~path of~length~$l$; the~same holds for~vertices~$u_2$ and~$v_2$. Consequently, $x \leq 2l + \delta + c_3$ and $|y| < 2l + c_4$. Let us enforce the~inequalities~$2l + \delta + c_3 < \delta + w + c_1$ and~$2l + c_4 < h + c_2$ by~increasing~$w$ and~$h$ (if necessary), so that none of~the~cases corresponding to~non-interlocked links~$L_1$ and~$L_2$ are possible. Since~$l = O(1)$, the~two restrictions can be written as
\begin{align} \label{coupling1} w > c_5, h > c_6\end{align}
Under these restrictions,~$L_1$ and~$L_2$ must be interlocked. An~inductive argument implies that all corresponding pairs of~links will be interlocked (the~situation is symmetrical to~the~bottom of~the~axle).

Finally, we must consider a~situation when horizontal edges of~links may have opposite direcion. In this case, we must have two links (or a link and a frame-attached docking component) with antennas pointing towards each other. Following the~notation of~the~previous paragraph, in this case we must have either~$x > 2w + c'$ or~$|y| > h / 2 + c''$ for~certain constants$~c', c''$. This situation can be eradicated by~strenghtening the~requirement~\eqref{coupling1} if necessary.

Let us note that the shift between any pair of~adjacent links in~a~chain (to~the~top of~the~axle) is~$(O(1), h + O(1))$ since they are connected by~a~path of~length~$l = O(1)$.

\subsubsection{Flags and attachments}

A~flag is attached to~the~midpoint of an auxiliary edge chain between consecutive links as shown on Fig.~\ref{flagspic1} (similarly to~the~situation on~Fig.~\ref{linkaxispic}, several ways of attachment are possible). Let us choose the~dimensions of~a~flag:~$2w_f = \delta - 2r - 4 = w + O(1)$, and~$2h_f = h$.

\begin{center}
\begin{figure}[htb]
\centering
\begin{subfigure}{\textwidth}
\centering
\includegraphics[scale=2]{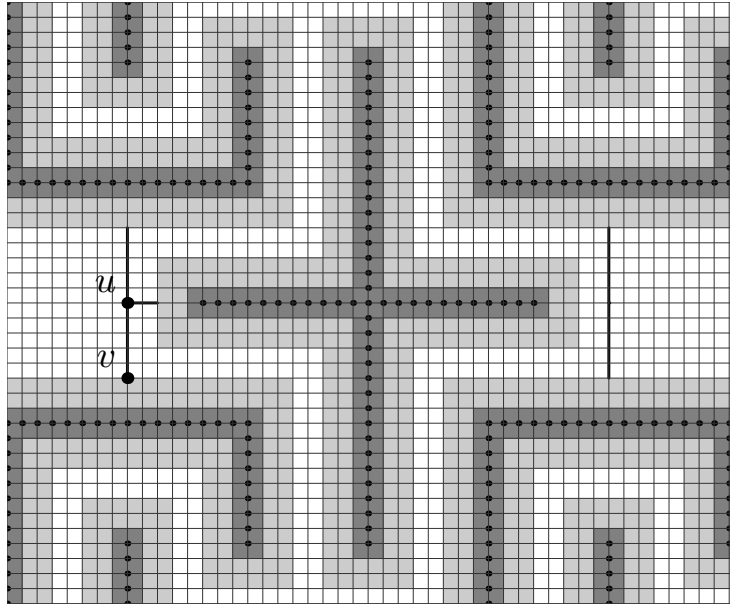}
\end{subfigure}
\captionsetup{justification=centering}
\caption{A~flag and its attachment to~an~auxiliary edge chain}
\label{flagspic1}
\end{figure}
\end{center}

The~length of~the~$uv$-path is~$O(1)$, and the flag is connected to~the~vertex~$u$ by a chain of two edges. We have shown above that the~shift between two adjacent links on~the~same level is~$(\delta + O(1), O(1))$, and the~shift between~consecutive links on~a~same chain is~$(O(1), h + O(1))$. A~reasoning similar to~the~one in~Section~\ref{links} can show that under restrictions \begin{align} w_f > c_7, h_f > c_8 \end{align}
bars of~the~cross will be aligned with~the~axes in~any embedding.

Let us ensure that two adjacent flags on~the~same level can not point towards each other. Let us enforce \begin{align} \label{crossver} 2h_f > l + 2r. \end{align} Now the~vertical bar can not fit into~the~vertical gap between two links on~the~same chain, hence it has to~be located in~the~space between the~chains. But for~this space to~accomodate two crosses simultaneously, the~gap must be at~least~$w_f$ wide; however, it is only~$\delta - w + O(1) = O(1)$ wide. From this, we obtain the restriction: \begin{align} \label{crosshor} w_f > c_9. \end{align}

Finally, if a~flag is pointed towards the~wall of~the~frame, then we must have~$\delta / 2 > w_f + O(1)$, which is equivalent to~$w / 2 < O(1)$, hence we obtain the~final~requirement \begin{align} \label{crosswall} w > c_{10}. \end{align}

\subsubsection{Choosing the parameters}

It suffices now to~choose suitable values for~parameters~$w$, $h$, $W$, $H$, $w_f$, $h_f$ to~satisfy the~restrictions~\eqref{linkshor}-\eqref{crosswall}, since we have established all necessary features of~the~construction required~to~implement the~correct logic engine behaviour. It can be verified that we can choose~$w = O(1)$,~$h = O(n)$. We now have that all vertices of~the~graph~$X$ fit inside a~rectangle of~dimensions $O(n) \times O(nm)$ in~the~basis~$\B$. Vertices of~$G$ are placed discretely in~$\B$, hence the~graph~$X$ has size~$O(n^2 m)$.

Thus, the~construction of~the~graph~$X$ is a~valid polynomial reduction from \textbf{LOGIC-ENGINE} to~injective embeddability in~$\R^1$, hence the~latter is NP-hard.

\subsubsection{Strict embeddability}

Up to this point, we have only considered non-strict injective embeddings of~$X$. However, note that if~$X$ is injectively embeddable, then it must be strictly injectively embeddable as well. Indeed, let~$\phi$ denote an~injective embedding of~$X$ reconstructed from~a~logic engine realization in~such a~way that all auxiliary chains are aligned with~corresponding axes (as~on~Fig.~\ref{schemepic}). It can be verified that in~$\phi$ the~gray regions of~different parts are at~distance at~least~2 apart, and~$\phi$ is locally induced in~all chain attachment points, thus~$\phi(X)$ is an~induced subgraph of~$\Gamma$, consequently,~$\phi$ is strict. Thus, the~reduction can be applied to strict injective embeddability just as well, and the~complexity result is naturally extended.

Finally, to~establish NP-hardness of~strict non-injective embeddability, we invoke Prop.~\ref{notinj}, and point out that no two vertices of $X$ have the~same neighbourhood by~construction of~$X$. This concludes the~proof of~the~(b) part of~Theorem~\ref{summary} in~the~$k = 2$~case.

\subsection{Construction for~the~$k > 2$ case} \label{z3}

\begin{center}
\begin{figure}[htb]
\centering
\begin{subfigure}{\textwidth}
\centering
\includegraphics{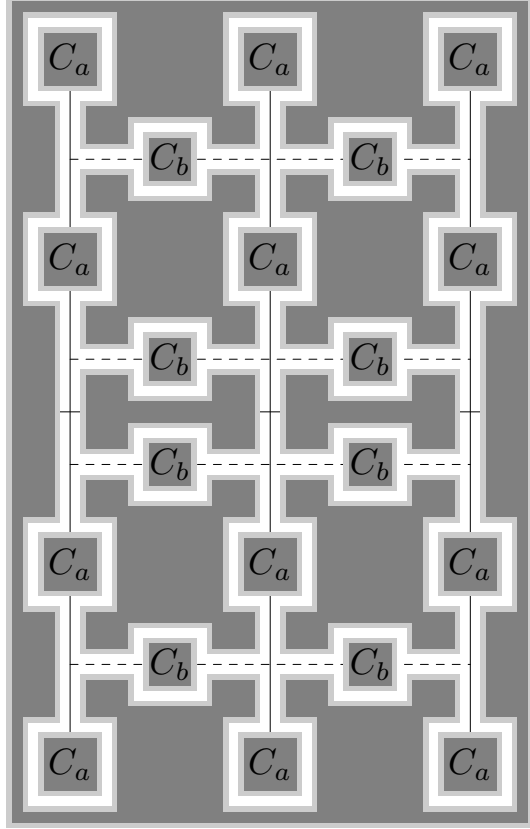}
\end{subfigure}
\captionsetup{justification=centering, margin=2cm}
\caption{Construction scheme for~the~$k > 2$ case. Dashed chains are introduced depending on whether a flag is present in~the~corresponding place of~the~logic engine.}
\label{schemekdpic}
\end{figure}
\end{center}

As~before, we choose~$r = \max(R(\A), \max_{||x|| \leq 1} \rho(x))$. For~an~integer~$a$, let~$C_a$ denote the~ball bundle with~centers in~integer points of~the~hypercube region~$S_a$.

The reduction of {\bf LOGIC-ENGINE} to $\A$-embeddability in $\R^1$ for the $k > 2$ case works as follows. Let the input logic engine contain $n$ rods, and $m$ flag levels on each side of the axle. Choose parameters $a$ and $b$. Construct the graph $X'$ as a union of $2nm$ copies of the graph $C_a$ that emulate ``chain links'', $2(n - 1)m$ copies of $C_b$ that emulate flags, and auxiliary edge chains according to the Fig. \ref{schemekdpic}. Here the horizontal direction corresponds to the axis $0b_1$, and the vertical direction to the axis $0b_2$, where $b_1$ and $b_2$ are elements of the basis $\B$ (all other elements of $\B$ are orthogonal to the displayed plane).

Choose $L$ as a large enough parallelepiped that encompasses $X'$. We choose $L$ so that the set $L'$ containing all integer points at least $r + 2$ away (with respect to $l_{\infty}$ norm) of gray regions of balls of $X'$ and auxiliary chains is connected in $\Gamma$. Construct a ``framework'' graph $Y$ as an induced union of balls with vertices in $L'$. The black regions of balls of $Y$ contain $L$ with removed neighbourhoods of all $C_a$ and $C_b$, and ``corridors'' of width $2r + 4$ around auxiliary chains. To obtain the graph $X$, attach all auxiliary chains along with the paths anchoring them to $Y$, and also attach the flags according to the input logic engine configuration.

As with the $k = 2$ case, the restriction of any embedding of $X$ to the vertices of the ball bundle $Y$ is an affine automorphism of $\Gamma$, hence we can assume that any embedding of $X$ acts trivially on $Y$. We claim that for $a, b \geq D$ and large enough $D$ we have that in any embedding of $X$ all images of $C_a$ and $C_b$ must be aligned with coordinate axes.

Consider a copy of $C_D$, and assume that in any embedding of $X$ its image lies completely within $L$. First, observe that for a large enough $D$ the image of the center of $C_D$ cannot lie in any ``corridor'' of width $2r + 4$. Indeed, the restriction of an embedding to vertices of $C_D$ is composed of a translation and a non-degenerate linear map $T: \R^k \to \R^k$ that is a unique extension of an element of $\Aut_0(\Gamma)$. For~a~particular~$T$, choose $D$ so large that $T(S_D)$ contains all points at most $r + 3$ away from the cube center. We now have that the black region of $T(C_D)$ cannot fit in any corridor by at least one dimension. FInally, observe that there are only finitely many options for $T$, hence we can choose $D$ that excludes placing $C_D$ in a corridor for each of the options.

Let the center of $C_D$ now lie an a cubic neighbourhood of size $D + 2r + 2$. Let $T$ denote the same linear map as in the previous paragraph. Observe that $T(\Conv \A) = \Conv \A$, hence $T$ is volume-preserving (i.e. has Jacobian equal to $\pm 1$). Consider the bounding box (that is, the least enclosing axes-aligned parallelepiped) of~the~set~$T(S_D)$ denoted as $P_D$. Let $(p_1(D), \ldots, p_k(D))$ denote the linear dimensions of $P_D$. If~$T(C_D) \neq C_D$, then $p_i(D) > D$ for a certain coordinate $i$. Moreover, the numbers $p_i(D)$ are linear~in~$D$.

The faces of the cubic neighbourhoods in our construction are allowed to have ``windows'', that is, openings of corridors of width $2r + 4$. Let us show that image of a large enough $C_D$ does not fit in a cubic neighbourhood with windows. Let $H_z$ denote the hyperplane $x_i = z$ (with the index $i$ chosen above). Suppose that the center of the image of $C_D$ has coordinate $i$ equal to $x_i$ relative to the center of the cubic neighbourhood. Then we must have both
\[\Vol_{k - 1}(T(S_D) \cap H_{D + r + 1 - x_i}) \leq (2r + 4)^{k - 1}, \Vol_{k - 1}(T(S_D) \cap H_{D + r + 1 + x_i}) \leq (2r + 4)^{k - 1},\] since otherwise the black region of $T(C_D)$ does not fit into the windows of $(k - 1)$-dimensional volume $(2r + 4)^{k - 1}$. The value $\min(\Vol_{k - 1}(T(S_D) \cap H_{D + r + 1 \pm x_i}))$ is attained for $x_i = 0$, so it suffices to show \[\Vol_{k - 1}(T(S_D) \cap H_{D + r + 1}) > (2r + 4)^{k - 1}\] for sufficiently large $D$.

Let $q$ be the point of $T(S_1)$ with the largest coordinate $i$, and $t = p_i(1) > 1$ be the value of this coordinate. The $(k-1)$-dimensional volume $\Vol_{k - 1}(T(S_1) \cap H_1) = z_1$ must be positive. We have that $\Vol_{k - 1}(T(S_D) \cap H_D) = z_1 D^{k - 1}$. Further, by convexity the set $T(S_D) \cap H_{D + r + 1}$ must contain the homothetical image of $T(S_D) \cap H_D$ with the homothetic center $Dq$ and coefficient $(Dt - (D + r + 1)) / (Dt - D)$, hence
\[\Vol_{k - 1}(T(S_D) \cap H_{D + r + 1}) \geq z_1 D^{k - 1} \cdot \left(\frac{Dt - (D + r + 1)}{Dt - D}\right)^{k - 1} = z_1 D^{k - 1}\left(1 - \frac{r + 1}{D(t - 1)}\right)^{k - 1}\]

Clearly, the right-hand side can be made larger than $(2r + 4)^{k - 1} = O(1)$ by choosing a large enough $D$. It follows that $T(C_D) = C_D$ in any embedding for a large enough $D$.

Let us now choose suitable values for $a$ and $b$. Let $a, b \geq D$ and $a > b + 2r + 2$, then in any embedding images of all copies of $C_a$ and $C_b$ must be axes-aligned, and also copies of~$C_a$ cannot fit into neighbourhoods of $C_b$. Let us further impose $2(2a)^k > (2a + 2r + 2)^k$, $2(2b)^k > (2b + 2r + 2)^ k$, $(2a)^k + (2b)^k > (2a + 2r + 2)^k$. Under these restrictions no two cube copies cannot lie within the same neighbourhood due to natural volume inequalities. Finally, due to distance limitations imposed by the auxiliary edge chains, the chain links will be positioned vertically, and each flag can only go to the slot nearest to the anchoring point. It follows that each copy of $C_a$ will lie in a neighbourhood of size $a + 2r + 2$, hence the copies of $C_b$ will lie in neighbouhoods of size $b + 2r + 2$. Consequently, we can restore a logic engine realization from any embedding of $X$. Finally, we are free to choose any suitable directions for link chains and flags, hence a logic engine realization can be turned to an embedding of $X$. In this way, the two problems are seen to be equivalent.

Since the chosen $a$ and $b$ are independent on the input, we have $a = O(1)$, $b = O(1)$, and the graph $X$ is enclosed in a parallelepiped with dimensions $O(n) \times O(m) \times O(1) \times \ldots \times O(1)$, and its size is $O(nm)$. Thus the reduction from {\bf LOGIC-ENGINE} is polynomial, and the (b) case of Theorem \ref{summary} is established for injective embeddings.

To conclude the proof of Theorem \ref{summary} we consider strict non-injective $\A$-embeddings for $k > 1$. It can be verified explicitly that the reduction graphs constructed in Sections~\ref{z2} and~\ref{z3} do not have vertices with equal neighbourhoods. 
Consequently,~Proposition \ref{notinj} implies that each of these graphs is strictly $\A$-embeddable if and only if it is strictly injectively $\A$-embeddable, hence the same constructions work for reducing {\bf LOGIC-ENGINE} to the strict embeddability problem. Theorem \ref{summary} is now proven completely.

%\bibliographystyle{ieeetr}
%\bibliography{bhatt,bodlaender,boedlandertourist,brass,eades,erdos,horvat,macgillivray,matousek,raigorborsuk,raigorcliques,raigorcoloring,ryabchenko,saxe,schaefer,tikhomirov}
\printbibliography

\end{document}